\journal{Journal of Computational Physics}
\newcommand{\eqr}[1]{equation #1}
\newcommand{\pfrac}[2]{\frac{\partial #1}{\partial #2}}
\newcommand{\pfraca}[1]{\frac{\partial}{\partial #1}}
\newcommand{\mvec}[1]{\mathbf{#1}}
\newcommand{\gvec}[1]{\boldsymbol{#1}}
\newcommand{\fbar}{\bar{f}}
\newcommand{\gcs}{\nabla_{\mvec{x}}}
\newcommand{\gvs}{\nabla_{\mvec{v}}}
\newcommand{\dtz}{\thinspace dz}
\newtheorem{proposition}{Proposition}
\newtheorem{lemma}{Lemma}
\newtheorem{remark}{Remark}
\newcommand{\cbas}[1]{\gvec{\sigma}_{#1}}
\newcommand{\basis}[1]{\mvec{e}_{#1}}
\newcommand{\nbasis}[1]{\hat{\mvec{e}}_{#1}}
\newcommand{\dbasis}[1]{\mvec{e}^{#1}}
\newcommand{\ndbasis}[1]{\hat{\mvec{e}}^{#1}}
\newcommand{\vnorm}[1]{\lVert{#1}\rVert}
\begin{document}

% Linenumbers
%\linenumbers

\begin{frontmatter}

\title{A Conservative Discontinuous Galerkin Algorithm for Particle Kinetics on Smooth Manifolds}

\author[aff1,aff2]{Grant Johnson\corref{cor1}}
\ead{grj@princeton.edu}

\author[aff2]{Ammar Hakim}

\author[aff2]{James Juno}

\cortext[cor1]{Corresponding author.}

\address[aff1]{Department of Astrophysical Sciences, Princeton University, Princeton, NJ 08544, USA}
\address[aff2]{Princeton Plasma Physics Laboratory, Princeton, NJ 08540, USA}

\begin{abstract}
  A novel, conservative discontinuous Galerkin algorithm is presented
  for particle kinetics on manifolds. The motion of particles on the
  manifold is represented using using both canonical and non-canonical
  Hamiltonian formulations. Our schemes apply to either formulations,
  but the canonical formulation results in a particularly efficient
  scheme that also conserves particle density and energy exactly. The
  collisionless update is coupled to a Bhatnagar-Gross-Krook (BGK)
  collision operator that provides a simplified model for relaxation to
  local thermodynamic equilibrium. An iterative scheme is constructed
  to ensure collisional invariants (density, momentum and energy) are
  preserved numerically. Rotation of the manifold is incorporated by
  modifying the Hamiltonian while ensuring a canonical
  formulation. Several test problems, including a kinetic version of
  the classical Sod-shock problem, Kelvin-Helmholtz instability on the
  surfaces of a sphere and a paraboloid, with and without rotations, is
  presented. A prospectus for further development of this approach to
  simulation of kinetic theory in general relativity is presented.
\end{abstract}

%%Research highlights
% \begin{highlights}
% \item Research highlight 1
% \item Research highlight 2
% \end{highlights}

\begin{keyword}
kinetic theory \sep 
discontinuous Galerkin \sep 
Hamiltonian systems \sep 
manifolds \sep 
plasma physics
\end{keyword}

\end{frontmatter}

%\tableofcontents

\section{Introduction and Background}

The calculus of variations is an extremely powerful tool in classical mechanics. Variations in the action functional yield curves along which particles travel. To specify the action for a system, one introduces a Lagrangian $L(q, \dot{q})$ on the configuration space or a Hamiltonian $H(q, p)$ on the phase space. Applying the action principle — either Euler-Lagrange equations or Hamilton’s equations — yields the equations of motion. The power of this approach is made clear when seeking the equations of motion on a general Riemannian manifold, as required when modeling physical systems in strong gravitational fields. In general relativity (GR) the fundamental object is the spacetime on which the particles move, reacting to gravity \cite{wald2010general, witten2020light}. Gravity no
longer appears as a force; rather, it is a manifestation of spacetime
curvature. On this manifold, point particles, in the collisionless limit, follow geodesics, the
analog of straight lines. The study of kinetic theory in GR, which is
the study of kinetic theory on manifolds, remains a fundamental challenge.

Consider a gas around a compact object like a black hole or a neutron
star: the interaction of this gas is due to the combination of
free-streaming between collisions and external forces (for example, from
electromagnetic fields). In turn, the motions of charged particles
 generate currents that modify the electromagnetic field via
Maxwell equations. Further, the field and matter distribution modify
the gravitational field via Einstein equations. This highly nonlinear
process is poorly understood in the regime of intense
gravity. Presently, this physics is modeled using general relativistic
hydrodynamics (GRHD) \cite{dodin2010vlasov,font1998grhydro,font1999kerr} or magnetohydrodynamics
(GRMHD) equations \cite{anton2006grmhd,komissarov2004general}.  It is difficult to
incorporate kinetic effects into such models. In non-relativistic
settings, one would use closures; however, a naive closure in relativity
leads to problems of causality, restricting the validity of such
closures to a narrow range of parameter space. For example, a
diffusive closure for the heat-flux would mean that heat propagates
instantaneously, violating causality (which states that no causal
phenomena can travel faster than the speed of light) \cite{chandra2015extended, cordeiro2024causality}.

The current state-of-the-art is to directly simulate particle kinetics
around compact objects with a general relativistic particle-in-cell
(GRPIC) method \cite{cerutti2015particle, chen2014electrodynamics,
  crinquand2020multidimensional, galishnikova2023collisionless,
  parfrey2019first, philippov2014ab,galishnikova2025entity}. In GRPIC, Hamilton's equations are used to update the position and momentum of each
particle, and the particles couple to the GR-Maxwell equations via the
electromagnetic fields and the
currents \cite{komissarov2004electrodynamics}. The advantage of GRPIC
is its relative ease of implementation, permitting the inclusion of complex physics in a relatively straightforward manner. However, noise is intrinsic to the PIC method;
hence, for problems in which detailed and delicate distribution
functions are required, one needs to resort to a large number of
particles-per-cell, dramatically increasing costs. Further, the noise
in PIC simulations can act as artificial thermalizing collisions \cite{juno2020noise,nevins2005discrete}, leading to
incorrect growth rates for instabilities and improper saturated
states. In such situations, directly discretizing the Vlasov equations
in 6D phase-space becomes attractive, as the noise-free distribution
function can be obtained.

Continuum methods have not received as much attention, as their
implementations can be complicated and expensive, since a momentum grid
needs to be introduced, resulting in a 6D update. However, high-order
methods, like the discontinuous Galerkin (DG) scheme, allow using
fewer cells in momentum space, thereby reducing the total
degrees-of-freedom (DOFs), while still resulting in accurate
solutions. Furthermore, due to the high local floating-point operation
(flop) cost of DG schemes and the need to only communicate immediate
face neighbor data, one can exploit modern CPU and GPU architectures,
as these perform exceptionally well for schemes that have a high
flop-to-data ratio. The decrease in DOFs, the increased accuracy, and
the availability of modern computing architectures mean that continuum
schemes are now not only feasible but for many problems that require
detailed phase-space structure, they are an ideal approach.

In this paper, we take a step towards developing numerical tools for a
detailed, first-principles understanding of kinetic theory in GR by
designing algorithms first for non-relativistic kinetic theory on
manifolds. Our approach is based on formulating free-streaming
(collisionless) motion using a Hamiltonian formalism and then
constructing a conservative discontinuous Galerkin scheme for the
resulting kinetic equation. We also introduce an approximate collision
operator and construct a conservative, implicit, and asymptotic
preserving iterative scheme. Our schemes are based on our previous
research in developing and extensively applying such schemes for
gyrokinetic and Vlasov equations for use in fusion \cite{ francisquez2020conservative, hakim2020conservative,mandell2020electromagnetic,shi2017gyrokinetic} and space and astrophysical plasma physics
applications \cite{hakim2020alias,juno2018discontinuous}. Here we show the extension of
these schemes to account for motion on curved manifolds, including
the incorporation of effects of rotation. The scheme presented here is specifically
applied to two-dimensional manifolds embedded in flat 
three-dimensional Euclidean space. The extension to an intrinsically
curved four-dimensional manifold requires further mathematical developments, and these
will be presented in a separate publication.

This paper will be laid out as follows: Section \ref{sec:Hamil_systems} introduces Hamiltonian systems and their properties. Section \ref{sec:Hamiltonian_Formulations} then expands upon this by introducing canonical and non-canonical systems. Local equilibrium and moments for these systems and simple collisions are introduced next in Section \ref{sec:Equilibrium Distribution Function}. Section \ref{sec:sphere_theory} extends the theory to include static potentials and background flows, specifically examining the case of a uniformly rotating sphere and the connection to fluid equations. The discontinuous Galerkin discretization scheme for the canonical bracket solver and proofs of conservation and stability properties of the discrete scheme are in Section \ref{sec:scheme}. Section \ref{sec:Benchmarks}  benchmarks the canonical bracket solver for examples in both the highly collisional and collisionless limits across a range of geometries, verifying the proofs from the prior section. Section \ref{sec:norm_momenta} introduces a method that uses normalized momentum basis vectors, leading to a non-canonical bracket. Finally, Section \ref{sec:Conclusions} provides a summary of the key results and directions for future research.
\section{Hamiltonian Systems and Their Properties}
\label{sec:Hamil_systems}

Two major formulations of mechanics are the Lagrangian and Hamiltonian
formulations. The Hamiltonian formulation describes the evolution of a
system through phase space, which consists of
configuration space and momentum space. Phase space is a
$2d$ space in which particle flows are incompressible. The two
pieces that go into a Hamiltonian system are the Poisson bracket operator 
and the Hamiltonian. 

Canonical Hamiltonian systems have been extensively treated in classical texts such as Arnold \cite{arnold1989mathematical} and Goldstein \cite{goldstein2002classical}. As well, generalizations to  non-canonical coordinates are likewise well documented \cite{arnold1989mathematical, cary2009hamiltonian}. The relevant aspects have been summarized here as we later use these properties to analyze the fidelity of the discrete system.

The Poisson bracket is
an antisymmetric, bilinear operator that acts on two functions of
phase space. Let $z^i$, $i=1,\ldots,2d$ be phase space
coordinates. Then, for two arbitrary scalar functions $f(z^i,t)$ and
$h(z^i,t)$, the Poisson bracket is
\begin{align}
  \{f, h \} = \pfrac{f}{z^i} \Pi^{ij} \pfrac{h}{z^j} \label{eq:pb-gen}
\end{align}
where $\Pi^{ij}$ is the antisymmetric Poisson tensor. Furthermore,
the Poisson bracket also satisfies the Jacobi identity
\begin{align}
  \{f, \{g,  h \} \}
  +
  \{g, \{h,  f \} \}
  +
  \{h, \{f,  g \} \} = 0.
\end{align}
One can find special coordinates, called canonical coordinates,
in which the Poisson tensor (and hence the Poisson bracket) takes a
very special form. In these canonical coordinates, $(q^i,p_i)$,
$i=1,\ldots,d$, the Poisson bracket becomes
\begin{align}
  \{f, h \}
  =
  \sum_{i=1}^d
  \left(
  \pfrac{f}{q^i}\pfrac{h}{p_i}
  -
  \pfrac{f}{p_i}\pfrac{h}{q^i}
  \right).
\end{align}
This is called the canonical Poisson bracket operator. In this
case, the Poisson tensor takes the simple form
\begin{align}
  \gvec{\sigma} =
  \left(
   \begin{matrix}
    \mathbf{0} & \mathbf{I} \\
    -\mathbf{I} & \mathbf{0}
   \end{matrix}
  \right)
\end{align}
where $\mvec{I}$ is a $d\times d$ unit matrix. The matrix
$\gvec{\sigma}$ is the fundamental symplectic matrix (playing a role
analogous to the Cartesian metric $\delta_{ij}$ in Riemannian
geometry).

Often, canonical coordinates are not ideally suited to the problem at
hand, and in this case, we need to transform the Poisson tensor to a
new set of coordinates. Consider a general transform,
\begin{align}
  z^i = z^i(q^1,\ldots,q^d, p_1,\ldots,p_d).
\end{align}
Then, the Poisson tensor in this new set of coordinates is given by
\begin{align}
    \Pi^{ij} 
  = \nabla_{\mvec{q}}z^i\cdot\nabla_{\mvec{p}} z^j - \nabla_{\mvec{p}}z^i\cdot\nabla_{\mvec{q}}z^j
\end{align}
for $i=1,\ldots,2d$, where $\nabla_{\mvec{q}}$ and
$\nabla_{\mvec{p}}$ are gradient operators in canonical
coordinates. This general form of the Poisson tensor can be obtained
in several ways, and we will show specific examples later in the paper while
discussing the non-canonical formulation of kinetic theory on curved
spaces.

At this point, the manipulations of the Poisson tensor (and hence the
definition of the Poisson bracket) are purely geometric. To bring 
specific physics into the system, we introduce a scalar function,
$H(z^1,\ldots,z^{2d})$, called the Hamiltonian 
, which determines the equations of motion using the Poisson bracket:
\begin{align}
  \dot{z}^i = \{ z^i, H \} = \Pi^{ij} \pfrac{H}{z^j}. \label{eq:zdot-PB}
\end{align}
These are $2d$ ordinary differential equations (ODEs) that describe
the evolution of particle's phase-space coordinates. If we now introduce
a particle distribution function, $f(z^i,t)$ in phase space, then the
evolution of the distribution is given by
\begin{align}
  \pfrac{f}{t} + \dot{z}^i \pfrac{f}{z^i} = 0.
\end{align}
This equation simply states that the value  of the distribution function remains constant along the characteristics.
By using equations \ref{eq:pb-gen} and \ref{eq:zdot-PB}, this
equation can be rewritten in a generic form
\begin{align}
  \pfrac{f}{t} + \{f, H\} = 0. \label{eq:kin-noncons-2}
\end{align}
The derivation of the kinetic equation from a Poisson bracket and Hamiltonian is a well-established procedure, with a number of studies constructing a diverse array of kinetic systems including the Vlasov-Maxwell system of equations \cite{morrison1980vlasov, marsden1982vlasov} and various reductions of the Vlasov equation, such as gyrokinetics \cite{cary2009hamiltonian, burby2017kinetic}. 

With the kinetic equation defined, we next consider some properties and conserved quantities that arise from this equation with an eye toward constructing numerical schemes that conserve or decay these properties as appropriate. 
The fundamental geometric quantity that appears in this equation is the Poisson tensor $\Pi^{ij}$. 
Everything else is defined in terms of this quantity. 
The phase space Jacobian is given by
\begin{align}
  \mathcal{J} = \frac{1}{\sqrt{\det\gvec{\Pi}}}
\end{align}
where $\gvec{\Pi}$ is the matrix of the components $\Pi^{ij}$. The
characteristic velocities are incompressible in phase space:
\begin{align}
  \pfraca{z^i} (\mathcal{J} \dot{z}^i)
  =
  \pfraca{z^i} \big(\mathcal{J}\Pi^{ij} \pfrac{H}{z^j} \big) = 0. \label{eq:phase_space_incompressibility}
\end{align}
Expanding this
\begin{align}
  \pfraca{z^i} \big(\mathcal{J}\Pi^{ij} \big) \pfrac{H}{z^j}
  +
  \mathcal{J}\Pi^{ij} \frac{\partial^2 H}{\partial z^i \partial z^j}
  =
  0.
\end{align}
The second term vanishes due to the commutativity of the partial derivatives combined with the antisymmetry of $\Pi^{ij}$. We thus obtain the Liouville identities
\begin{align}
  \pfraca{z^i} \big(\mathcal{J}\Pi^{ij} \big) = 0.
\end{align}
From these identities, the antisymmetry of $\Pi^{ij}$, and the commutativity of partial derivatives, we can write the Poisson bracket of two arbitrary functions as a divergence:
\begin{align}
  \{f, h \}
  =
  \frac{1}{\mathcal{J}}
  \pfraca{z^i}
  \left(
  f \mathcal{J} \Pi^{ij} \pfrac{h}{z^j}
  \right)
  =
  \frac{1}{\mathcal{J}}
  \pfraca{z^i}
  \left(
  \pfrac{f}{z^j} \mathcal{J} \Pi^{ij} h
  \right).
  \label{eq:pb-div}
\end{align}

This expression \eqr{\ref{eq:pb-div}} allows for writing the
evolution equation of the distribution function in conservation law
form
\begin{align}
  \pfrac{f}{t} + \frac{1}{\mathcal{J}}\pfraca{z^i} \big( \mathcal{J} \dot{z}^i f
  \big) = 0.
  \label{eq:kin-cons-2}
\end{align}
or
\begin{align}
  \pfrac{f}{t} + \frac{1}{\mathcal{J}}\pfraca{z^i} 
  \big( 
  \mathcal{J}  f \Pi^{ij} \pfrac{H}{z^j}
  \big) = 0.
  \label{eq:kin-cons-3}
\end{align}

From phase space incompressibility,
\eqr{\ref{eq:phase_space_incompressibility}} it also follows that
\begin{align}
  f \{f, H \} &= \{ \frac{f^2}{2}, H \}.
\end{align}
In other words, this quadratic quantity $f \{f, H \}$ can be written as a divergence because the Poisson bracket is
anti-symmetric. We can likewise apply the same identities to $H \{f, H \}$. We thus identify the following conserved quantities upon integration over all of phase space
\begin{align}
  \int \mathcal{J} \{f, H \} \dtz
  =
  \int \mathcal{J} f \{f, H \} \dtz
  =
  \int \mathcal{J} H \{f, H \} \dtz
  = 0,
\end{align}
where we have assumed no contributions from the boundaries in configuration space and that $f$ decays exponentially fast towards the boundaries in momentum space. 
The last two integrals are the quadratic invariants (Casimirs) of the Hamiltonian system.

Using these identities in the kinetic
\eqr{\ref{eq:kin-noncons-2}}, we obtain the conservation laws
\begin{align}
  \int \mathcal{J} \pfrac{f}{t} \dtz &= 0, \\
  \int \mathcal{J} H \pfrac{f}{t} \dtz &= 0, \\
  \int \mathcal{J} f \pfrac{f}{t} \dtz &= 0.
\end{align}
If the phase space transform is time-independent, then
$\mathcal{J}$ is also time-independent. In this case, we can write
these identities as
\begin{align}
  \frac{d}{dt} \int \mathcal{J} f \dtz &= 0, \\
  \int H \pfraca{t}(\mathcal{J} f) \dtz &= 0, \\
  \frac{d}{dt}\int \frac{1}{2} \mathcal{J} f^2 \dtz &= 0.
\end{align}
These represent, respectively, conservation of total number of
particles, particle energy, and the $L_2$-norm of the distribution
function.

In Section\thinspace\ref{sec:scheme}, we construct an energy-conserving discontinuous Galerkin scheme for the kinetic equation.  
Our scheme applies to both the canonical and non-canonical formulations. 
We will show that our scheme conserves the energy exactly even after discretizing time, as the Hamiltonian is time-independent. 
Further, in the case that the Poisson bracket is canonical, we show that using central fluxes conserves the $L_2$-norm, while using upwind fluxes decays the $L_2$-norm monotonically, which is preferable for stability.  
\section{Hamiltonian Formulations of the Collisionless Kinetic Equation on
  Manifolds}\label{sec:Hamiltonian_Formulations}

Consider a $d$-dimensional Riemannian manifold with a local coordinate
basis $\basis{i}$ and a corresponding dual basis $\dbasis{i}$. Let the
covariant components of the metric tensor be
$g_{ij} \equiv \basis{i}\cdot\basis{j}$. In general, these components will
depend on the position on the manifold, $x^k$. We restrict our attention to regions of smooth manifolds away from coordinate singularities so that $g_{ij}$ is non-degenerate, and thus the contravariant metric components $g^{ij}$ are finite everywhere in the domain and the metric determinant is non-zero.  The contravariant metric components, $g^{ij}$, are defined as the components of the inverse metric, satisfying $g^{ij} g_{jk} = \delta\indices{^i_k}$. With these geometric assumptions in place, we can formulate the kinetic theory on such a manifold by first considering free-streaming (collisionless) motion. To do this, we define a Hamiltonian
\begin{align}
  H = \frac{1}{2m} \mvec{p}\cdot\mvec{p} \label{eq:particleHamil}
\end{align}
where $\mvec{p}$ is the particle momentum. We will assume from here on 
the particle mass is $m=1$. The goal is to derive the kinetic equation that
describes the distribution function of particles in the absence of
collisions, adding collisional terms later. We will construct two
Hamiltonian formalisms: canonical and non-canonical. The resulting formalism depends on the choice of momentum space coordinates. 

% We note that the choice to use the particle Hamiltonian equation \ref{eq:particleHamil}, is in contrast to previous Hamiltonian formulations of kinetic equations, such as Morrison \cite{morrison1980vlasov} and Burby \cite{burby2017kinetic}, which formulate Hamiltonian field theories that include the contributions from, e.g., the electromagnetic fields to the Hamiltonian. 
% While we demonstrate in Section~\ref{sec:sphere_theory} the Hamiltonian can also be extended to add additional physics, such as rotation of the manifold, the particular choice being made here is to build our kinetic equation in terms of the motions of the particles separate from external, or self-generated collective, forces. 
% This approach is but one choice, but nevertheless as we will show permits significant flexibility in the formalism, both in our choice of coordinates and choice of physics. 

\subsection{Canonical Formulation}

To construct a canonical Hamiltonian formulation, we will use
covariant momentum components as our momentum-space coordinates; that
is, we write $\mvec{p} = p_i \dbasis{i}$ to obtain the Hamiltonian
\begin{align}
  H = \frac{1}{2} \dbasis{i}\cdot\dbasis{j} p_i p_j =  \frac{1}{2} g^{ij} p_i p_j.
\end{align}
In the $2d$ phase space, we will use the configuration space
coordinates $(x^1,\ldots,x^d)$ and the momentum space coordinates
$(p_1,\ldots,p_d)$. As we show below, this Hamiltonian, combined with the
canonical Poisson bracket operating on arbitrary functions $f$ and $h$ of the phase space,
\begin{align}
  \{f, h\}
  =
  \sum_{i=1}^d
  \left( 
  \pfrac{f}{x^i}\pfrac{h}{p_i} - \pfrac{f}{p_i}\pfrac{h}{x^i}
  \right)
  \label{eq:can-PB}
\end{align}
combine to describe the motion of free-streaming test particles on the manifold. By
free-streaming, we mean particles that follow geodesics on the
manifold, and by test particles, we mean particles that do not alter the
metric itself. We prove this claim below in
Proposition\thinspace\ref{thm:geo-motion}.

Hence, if $f(x^i,p_i,t)$ is the distribution function of particles,
then its evolution is described by the equation
\begin{align}
  \pfrac{f}{t} + \{f,H\} = 0. \label{eq:f-hamil}
\end{align}
In conservation law form, this equation can instead be written as
\begin{align}
  \pfrac{f}{t} + \pfraca{x^k} (\dot{x}^k f) + 
  \pfraca{p_k} (\dot{p}_k f) = 0
  \label{eq:f-cons}
\end{align}
where the characteristic velocities in phase space are given by
\begin{align}
  \dot{x}^k = \{ x^k, H \} = \pfrac{H}{p_k} = g^{ik} p_i
  \label{eq:xdot}
\end{align}
and
\begin{align}
  \dot{p}_k = \{ p_k, H \} = -\pfrac{H}{x^k} 
  = -\frac{1}{2} \pfrac{g^{ij}}{x^k} p_i p_j.
  \label{eq:wdot}
\end{align}

\begin{proposition}\label{thm:geo-motion}
  The equations of motion \eqr{\ref{eq:xdot}} and \eqr{\ref{eq:wdot}}
  represent geodesic motion on the manifold.
\end{proposition}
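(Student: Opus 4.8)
The plan is to show that Hamilton's equations \eqr{\ref{eq:xdot}} and \eqr{\ref{eq:wdot}}, a coupled system of $2d$ first-order ODEs, can be combined into $d$ second-order ODEs that coincide with the geodesic equation $\ddot{x}^k + \Gamma^k_{ij}\dot{x}^i\dot{x}^j = 0$, where $\Gamma^k_{ij}$ are the Christoffel symbols of the Levi-Civita connection associated with $g_{ij}$. First I would invoke the non-degeneracy of the metric (guaranteed by the standing assumptions on the manifold) to invert the velocity relation \eqr{\ref{eq:xdot}}, writing $p_i = g_{ik}\dot{x}^k$. This step is precisely what converts the Hamiltonian, momentum-based description into the configuration-space description in which the geodesic equation is usually stated.

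Next I would differentiate \eqr{\ref{eq:xdot}} once more in time, producing $\ddot{x}^k = (\partial g^{ik}/\partial x^l)\dot{x}^l p_i + g^{ik}\dot{p}_i$, and then substitute the second Hamilton equation \eqr{\ref{eq:wdot}} for $\dot{p}_i$. The key algebraic tool is the identity for the derivative of the inverse metric, $\partial g^{ij}/\partial x^l = -g^{ia}g^{jb}(\partial g_{ab}/\partial x^l)$, obtained by differentiating $g^{ij}g_{jk} = \delta^i_k$. Applying this identity to both surviving terms, and repeatedly using $g^{ia}p_i = \dot{x}^a$ to eliminate every momentum factor in favor of a velocity, collapses the expression into the form $\ddot{x}^k = g^{km}[\,\cdots\,]\dot{x}^a\dot{x}^l$ in which only covariant-metric derivatives remain inside the bracket.

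The final and most delicate step is the symmetrization. Because the product $\dot{x}^a\dot{x}^l$ is symmetric under $a\leftrightarrow l$, any metric derivative contracted against it may be freely replaced by its symmetric part; carrying this out on the term $(\partial g_{am}/\partial x^l)\dot{x}^a\dot{x}^l$ generates the three-term combination $\tfrac{1}{2}g^{km}(\partial_l g_{am} + \partial_a g_{lm} - \partial_m g_{al})$, which is exactly $\Gamma^k_{al}$. This yields $\ddot{x}^k + \Gamma^k_{al}\dot{x}^a\dot{x}^l = 0$, completing the identification with geodesic motion.

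The main obstacle I anticipate is purely in the bookkeeping: tracking the summation indices through the two applications of the inverse-metric identity, and ensuring that the symmetrization reproduces precisely the symmetric Christoffel combination rather than an asymmetric variant. No analytic difficulty arises beyond the non-degeneracy already assumed — the entire content of the proposition lies in the index algebra.
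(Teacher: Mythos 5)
Your proposal is correct and follows essentially the same route as the paper's proof: differentiate \eqr{\ref{eq:xdot}} in time, substitute \eqr{\ref{eq:wdot}}, eliminate momenta via $p_i = g_{in}\dot{x}^n$, apply the inverse-metric derivative identity (the paper's ``shift identity''), and symmetrize over the dummy indices contracted with $\dot{x}^m\dot{x}^n$ to recover the Christoffel symbols. The only cosmetic difference is that you state the shift identity in its fully contracted form $\partial g^{ij}/\partial x^l = -g^{ia}g^{jb}\,\partial g_{ab}/\partial x^l$, whereas the paper applies it one contraction at a time.
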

\begin{proof}
  The goal is to eliminate $p_k$ and their time derivatives and
  derive a second-order ODE for $x^k$. Then, we will show that
  this ODE is identical to the well-known equation for geodesics on
  the manifold. For this, take the time-derivative of
  \eqr{\ref{eq:xdot}} to get
  \begin{align}
    \ddot{x}^k = \pfrac{g^{ik}}{x^m} \dot{x}^m p_i + g^{ik} \dot{p}_i.
  \end{align}
  Now, from \eqr{\ref{eq:xdot}} we have $p_i = g_{in}
  \dot{x}^n$. Hence, using this expression and \eqr{\ref{eq:wdot}} to eliminate
  $p_k$ and their time derivatives, we get
  \begin{align}
    \ddot{x}^k = 
    \left(
    \pfrac{g^{ik}}{x^m} g_{in} 
    -
    \frac{1}{2} g^{ik} \pfrac{g^{st}}{x^i} g_{sm} g_{tn}
    \right)
    \dot{x}^m \dot{x}^n. \label{eqn:geodesic_eqn_in_terms_of_metric_grads}
  \end{align}
  Further, from $g^{ik} g_{in} = \delta\indices{^k_n}$, we must have
  \begin{align}
    \pfraca{x^m} (g^{ik} g_{in}) = 0
  \end{align}
  and hence we get the ``shift-identity''
  \begin{align}
    \pfrac{g^{ik}}{x^m} g_{in} = -g^{ik} \pfrac{g_{in}}{x^m}.
  \end{align}
  Using this shift-identity repeatedly, we get
  \begin{align}
    \ddot{x}^k 
    &= 
    -g^{ik} \left(
    \pfrac{g_{in}}{x^m}
    +
    \frac{1}{2} \pfrac{g^{st}}{x^i} g_{sm} g_{tn}
    \right)
    \dot{x}^m \dot{x}^n \notag \\
    &= 
    -g^{ik} \left(
    \pfrac{g_{in}}{x^m}
    -
    \frac{1}{2} g^{st}\pfrac{g_{sm}}{x^i} g_{tn}
    \right)
    \dot{x}^m \dot{x}^n \notag \\
    &= 
    -g^{ik} \left(
    \pfrac{g_{in}}{x^m}
    -
    \frac{1}{2} \pfrac{g_{nm}}{x^i}
    \right)
    \dot{x}^m \dot{x}^n.
  \end{align}
  Now, note that we have
  \begin{align}
    \pfrac{g_{in}}{x^m} \dot{x}^m \dot{x}^n =  \pfrac{g_{im}}{x^n} \dot{x}^m \dot{x}^n
  \end{align}
  as $m$ and $n$ are dummy indices. Hence, we can write
  \begin{align}
    \ddot{x}^k 
    =
    -\frac{1}{2} g^{ik} \left(
    \pfrac{g_{in}}{x^m} + \pfrac{g_{im}}{x^n}
    -
    \pfrac{g_{nm}}{x^i}
    \right)
    \dot{x}^m \dot{x}^n.
  \end{align}
  Utilizing the definition of the Christoffel symbols
  \begin{align}
    \Gamma\indices{^k_{m n}}
    =
    \frac{1}{2} g^{ik} \left(
    \pfrac{g_{in}}{x^m} + \pfrac{g_{im}}{x^n}
    -
    \pfrac{g_{nm}}{x^i}
    \right),
  \end{align}
  we are finally led to the second-order ODE that describes the
  particle motion:
  \begin{align}
    \ddot{x}^k  + \Gamma\indices{^k_{m n}} \dot{x}^m \dot{x}^n = 0, \label{eqn:geodesic}
  \end{align}
  which we can recognize as the equation for a geodesic on a manifold. 
\end{proof}

This proposition shows that all we need is the canonical Poisson
bracket and the corresponding Hamiltonian to update
the distribution function of free-streaming test-particles on a
manifold. We emphasize that the Christoffel symbols never explicitly appear. Further, the Jacobian of the transform, independent of the
coordinate system we choose, is always just unity because the determinant of the canonical Poisson tensor is 1. We show specific examples of how these geodesics evolve on the surface of 
a sphere and in hyperbolic geometry in \ref{app:geodesics}.  

We also remark that although the formalism does not depend on the dimension
or signature of the manifold, in relativistic physics it requires
modification, as the time coordinate needs to be treated on the same
footing as the spatial coordinates, but one must also account for the Lorentzian signature of spacetime.

\subsection{Non-Canonical Formulation}

Though we use a canonical Hamiltonian formulation to construct our
solver, we now show how one can construct a non-canonical formulation
and derive the corresponding Poisson bracket. To start, one constructs
the phase space Lagrangian
\begin{align}
  \mathcal{L} &= p_i \dot{x}^i - H.
\end{align}
As is well known, the Euler-Lagrange equations for this Lagrangian
yield the Hamiltonian equations of motion. Coordinate transforms can
now be applied to this phase space Lagrangian. As an example, we will
use the contravariant components of the momentum, $p^i$,
instead of the covariant components. The transform is simply
\begin{align}
  p_i = g_{ij} p^j.\label{eq:p_i_trans}
\end{align}
With this, the phase space Lagrangian becomes
\begin{align}
  \mathcal{L} &= g_{ij}\dot{x}^i p^j - H \label{eq:pp-lag-H}
\end{align}
with the Hamiltonian now given by
\begin{align}
  H = \frac{1}{2} g_{ij} p^i p^j.
\end{align}
One can substitute this into the phase space Lagrangian and use the
Euler-Lagrange equations to derive the equations of motion. However,
we wish to derive the Poisson bracket for this choice of
coordinates. Let $z^i = (x^1,\ldots,x^d,p^1,\ldots,p^d)$ be phase space
coordinates. Then, the equations of motion can be written in the form
\begin{align}
  \dot{z}^i = \{ z^i, H \} = \Pi^{ij} \pfrac{H}{z^j} \label{eq:zdot-Pi-H}
\end{align}
where now $\{f, h \}$ is the Poisson bracket corresponding to the new
set of coordinates, given by
\begin{align}
  \{f, h \} = \pfrac{f}{z^i} \Pi^{ij} \pfrac{h}{z^j} \label{eq:general_non_can_bracket}
\end{align}
and $i,j = 1,\ldots,2d$. (The index ranges will not usually be
specified explicitly and should be clear from the context).

Hence, to determine the Poisson bracket (or, equivalently,
the Poisson tensor $\Pi^{ij}$), we will work directly with
\eqr{\ref{eq:pp-lag-H}}, deriving the equations of motion in the form
\eqr{\ref{eq:zdot-Pi-H}} and reading off the Poisson tensor.

The Euler-Lagrange equation
\begin{align}
  \frac{d}{dt}\left( \pfrac{\mathcal{L}}{\dot{p}^k} \right) = \pfrac{\mathcal{L}}{p^k}
\end{align}
yields
\begin{align}
  0 = g_{ik}\dot{x}^i - \pfrac{H}{p^k}
\end{align}
or (index relabeling)
\begin{align}
  \dot{x}^i = g^{im}\pfrac{H}{p^m}. \label{eq:sol-xi-nc}
\end{align}

The Euler-Lagrange equation
\begin{align}
  \frac{d}{dt}\left( \pfrac{\mathcal{L}}{\dot{x}^k} \right) = \pfrac{\mathcal{L}}{x^k}
\end{align}
yields
\begin{align}
  \pfrac{g_{kj}}{x^m} \dot{x}^m p^j + g_{kj}\dot{p}^j
  = 
  \pfrac{g_{ij}}{x^k} \dot{x}^ip^j - \pfrac{H}{x^k}.
\end{align}
Rearranging this, we get
\begin{align}
  \dot{p}^s
  =
  g^{ks}\left(
  \pfrac{g_{ij}}{x^k}
  -
  \pfrac{g_{kj}}{x^i}
  \right)
  \dot{x}^ip^j
  - g^{ks} \pfrac{H}{x^k}.
\end{align}
Substituting \eqr{\ref{eq:sol-xi-nc}} into this we get (index relabeling)
\begin{align}
  \dot{p}^s
  =
  \Omega^{sm}
  \pfrac{H}{p^m}
  - g^{sm} \pfrac{H}{x^m}. \label{eq:sol-wi-nc}
\end{align}
where
\begin{align}
  \Omega^{sm}
  \equiv
    g^{ks}\left(
  \pfrac{g_{ij}}{x^k}
  -
  \pfrac{g_{kj}}{x^i}
  \right)
  g^{im} p^j\label{eqn:Omega}
\end{align}
is an anti-symmetric tensor. Hence, using \eqr{\ref{eq:sol-xi-nc}} and
\eqr{\ref{eq:sol-wi-nc}}, we can read off the Poisson tensor as
\begin{align}
  \gvec{\Pi} 
  =
  \begin{bmatrix}
    0 && g^{sm} \\
    -g^{sm} && \Omega^{sm}
  \end{bmatrix}.
\end{align}
From this, we can compute the Poisson bracket for the choice of momentum coordinates in equation \ref{eq:p_i_trans}
\begin{align}
  \{f, h \}  
  =
  g^{sm}
  \left(
  \pfrac{f}{x^s}\pfrac{h}{p^m}
  -
  \pfrac{f}{p^s}\pfrac{h}{x^m}
  \right)
  +
  \Omega^{sm} \pfrac{f}{p^s}\pfrac{h}{p^m}
  \label{eq:non-can-PB}
\end{align}
Note the complexity of this non-canonical Poisson bracket: the metric
and its derivatives explicitly appear, in contrast to the canonical
Poisson bracket\footnote{There is significant structural similarity
  between the non-canonical Poisson Bracket \eqr{\ref{eq:non-can-PB}}
  and the Poisson bracket for charged particle motion in a given
  magnetic field. That Poisson bracket is
  \begin{align*}
    \{f,h\} = 
    \frac{1}{m}
    \big(
    \gcs f \cdot \gvs h - \gvs f \cdot \gcs h
    \big)
    +
    \frac{q\mvec{B}}{m^2} \cdot 
    \gvs f \times \gvs h \label{eq:vm-bracket}    
  \end{align*}
  In fact, this PB is identical if we identify
  $\Omega^{mn} = \epsilon^{imn} B_i/\sqrt{g}$ (ignoring mass and
  charge factors), where $\epsilon^{imn}$ is the completely
  anti-symmetric Levi-Civita symbol.}.

A straightforward calculation shows that the non-canonical Poisson
bracket (or the phase space Lagrangian with the Hamiltonian expression
substituted into it) gives the expected equations of motion
(characteristic velocities in phase space)
\begin{align}
  \dot{x}^s &= p^s \\
  \dot{p}^s &= -\Gamma\indices{^s_{ij}} p^i p^j.
\end{align}
In terms of these characteristic velocities, the non-conservative form
of the kinetic equation is
\begin{align}
  \pfrac{\fbar}{t}
  +
  \dot{x}^s\pfrac{\fbar}{x^s}
  +
  \dot{p}^s
  \pfrac{\fbar}{p^s}
  =
  0
\end{align}
or
\begin{align}
  \pfrac{\fbar}{t}
  +
  p^s\pfrac{\fbar}{x^s}
  -
  \Gamma\indices{^s_{ij}} p^i p^j
  \pfrac{\fbar}{p^s}
  =
  0.
  \label{eq:non-can-nc-kin}  
\end{align}
Here, $\fbar(x^i,p^i,t)$ is the distribution function, now depending
on the contravariant components of the momentum.

The Jacobian of the non-canonical Hamiltonian system is
$\mathcal{J} = 1/\sqrt{\det \gvec{\Pi}} = g$, where $g$ is the
determinant of the matrix of covariant components of the metric
tensor. In terms of this Jacobian, phase space
incompressibility is
\begin{align}
  \pfraca{x^s}( g\dot{x}^s )
  +
  \pfraca{p^s}( g\dot{p}^s )
  =
  0.
\end{align}
To prove this relation we need to use the identity
\begin{align}
  \Gamma\indices{^i_{ij}}
  =
  \frac{1}{\sqrt{g}}\pfrac{\sqrt{g}}{x^j}.
\end{align}
Finally, the phase space incompressibility allows us to write the
conservation law form of the kinetic equation in the non-canonical
coordinates as
\begin{align}
  \pfrac{\fbar}{t}
  +
  \frac{1}{g}\pfraca{x^s}(g \dot{x}^s \fbar)
  +
  \pfraca{p^s}(\dot{p}^s \fbar)
  =
  0.
  \label{eq:non-can-kin}
\end{align}

\section{Equilibrium Distribution Function}\label{sec:Equilibrium Distribution Function}

From statistical mechanics, we learn that the equilibrium distribution
function in a primed, co-moving frame moving with the particle mean-velocity will be (up to a normalization) $e^{-\beta H'}$, where $\beta>0$ is a scalar and $H'$ is the Hamiltonian in the co-moving frame. Consider the non-relativistic, free-streaming Hamiltonian, $H' = \frac{1}{2}g^{ij}p'_i p'_j$.\footnote{For the relativistic treatment of the equilibrium, see \cite{johnson2025moment}.} By transforming from the co-moving frame to the unprimed lab frame, the momentum transforms as $\mvec{p}' = \mvec{p} - \mvec{u}$, and hence the equilibrium distribution
function in the lab frame is
\begin{align}
  f_M(\mvec{x},\mvec{p})
  &=
  \frac{n}{(2\pi v_{th}^2)^{d/2}}
    \exp\left[
    -(\mvec{p}-\mvec{u})\cdot(\mvec{p}-\mvec{u})/(2
    v_{th}^2) 
    \right]  \notag \\
  &=
  \frac{n}{(2\pi v_{th}^2)^{d/2}}
  \exp\left[-g^{ij}(p_i-u_i) (p_j-u_j)/(2 v_{th}^2) \right]. \label{eq:lab_frame_eq}
\end{align}
(From this point onward, we will use canonical coordinates, so the
expressions below are only applicable to that case). Here, $n$ is the
number density, $\mvec{u} = u_i\dbasis{i}$ is the mean velocity, $\mvec{p} = p_i\dbasis{i}$ is the covariant momentum coordinate, and
$v_{th}$ is the thermal speed of the particles. The quantities that appear in the equilibrium can be computed
by taking moments of the distribution function. Integrating over the
momentum space gives (see equation \ref{eq:I0-multivariate-gauss} in the
Appendix)
\begin{align}
  \int_{-\infty}^\infty f_M \thinspace d\mvec{p}
  =
  n \sqrt{g}\label{eqn:density_int_dp}
\end{align}
where the momentum element is defined as, $d\mvec{p} \equiv dp_1 dp_2\ldots dp_d$. The Jacobian factor, $\sqrt{g}$, arises because we are integrating over covariant momentum variables in $d\mvec{p}$. The invariant volume element in momentum space is 
\begin{align}
    d\mathcal{V}_p = \frac{1}{\sqrt{g}}d\mvec{p}.
\end{align}
Hence,
\begin{align}
  \int_{-\infty}^\infty f_M \thinspace d\mathcal{V}_p 
  =
  n.\label{eqn:invariant_density_integral}
\end{align}
However, equation \ref{eqn:invariant_density_integral} is not as useful as equation \ref{eqn:density_int_dp} since the distribution is a function of covariant momentum $p_i$. In terms of $d\mvec{p}$, momentum takes the form
\begin{align}
  \int_{-\infty}^\infty \mvec{p} f_M \thinspace d\mvec{p}
  = n\sqrt{g}\mvec{u}.
\end{align}
Further, we have the
pressure-tensor (see equation \ref{eq:I2-multivariate-gauss})
\begin{align}
  \mvec{P}_M
  =
  \int_{-\infty}^\infty \mvec{c}\otimes\mvec{c} f_M \thinspace d\mvec{p}
  =
  n\sqrt{g} v_{th}^2  \mvec{g}
\end{align}
where $\mvec{c} = \mvec{p}-\mvec{u}$, and $\mvec{g}$ is the metric
tensor, $\mvec{g}(\mvec{a},\mvec{b}) = \mvec{a}\cdot\mvec{b}$.  Taking
the trace of this, we get
\begin{align}
  \Tr \mvec{P}_M 
  =
  \int_{-\infty}^\infty \mvec{c}\cdot\mvec{c} f_M \thinspace d\mvec{p}
  =
  d n v_{th}^2 \sqrt{g}
\end{align}
where recall that $\Tr \mvec{g} = g^{ij}g_{ij} = d$ is the dimension of the space.

We note that these normalizations of the distribution function imply
that the moments of the kinetic equation will be evolution
equations for the quantities $n \sqrt{g}$, etc, and not $n$
itself. This is a natural outcome of the choice of canonical
coordinates in which the volume factor appears automatically without
having to explicitly multiply or divide by it.

Collisions drive the particles to an equilibrium state. The
approach to equilibrium can be complex and is usually described
by Boltzmann or Fokker-Planck type collision operators. In this paper
we will assume a simpler model, the BGK collision operator \cite{bgk1954}, as an
approximation to more detailed collisional physics. In this operator
the collisions simply relax the distribution function to an
equilibrium:
\begin{align}\label{eq:BGK_op}
  C[f] = -\nu [f - f_M(n,\mvec{u},v_{th}) ]
\end{align}
where $\nu$ is a collision frequency (potentially depending on moments
of $f$ but not on momentum coordinates), and $f_M$ is the equilibrium
distribution function. The input moments needed to construct the
equilibrium are computed from the distribution function as
\begin{align}
  n\sqrt{g} &= \int_{-\infty}^\infty f \thinspace d\mvec{p} \label{eq:Jn} \\
  n\sqrt{g}\mvec{u} &= \int_{-\infty}^\infty \mvec{p} f \thinspace d\mvec{p} \label{eq:Jnu_i} \\
  d n v_{th}^2 \sqrt{g} &= \int_{-\infty}^\infty \mvec{c}\cdot\mvec{c} f \thinspace d\mvec{p}. \label{eq:v_th^2}
\end{align}
These moments ensure that the collision operator conserves particles,
momentum, and energy.

\section{Modifications for Motion on Axisymmetric Rotating Surfaces} \label{sec:sphere_theory}

\subsection{Hamiltonian of a Uniformly Rotating Spherical Surface}
To study the utility of the schemes developed in this paper to
properly account for additional effects, we consider the modifications to
the approach outlined above to include constant rotation of the
manifold. Consider a sphere rotating with constant
angular velocity. Let $\Omega$ be the angular speed and the $z$-axis,
$\cbas{3}$, be the axis around which the sphere rotates. Then, using
standard spherical coordinates, a point on a unit sphere has the
position vector
\begin{align}
  \mvec{x} = 
  \sin\theta 
  \big[ 
  \cos(\phi+\Omega t) \cbas{1} + \sin(\phi + \Omega t) \cbas{2} 
  \big]
  + \cos\theta \cbas{3}.
\end{align}
Here $\theta \in [0,\pi]$ is the polar angle, $\phi \in [0,2\pi]$ is
the azimuthal angle, and $\cbas{i}$, $i=1,2,3$, are the basis vectors
in the global Cartesian coordinates of the embedding space. From this position vector, we can compute the
tangent vectors
\begin{align}
  \basis{\theta} &= \pfrac{\mvec{x}}{\theta}
  = 
  \cos\theta 
  \big[ 
  \cos(\phi+\Omega t) \cbas{1} + \sin(\phi + \Omega t) \cbas{2} 
  \big]
  - \sin\theta \cbas{3} \\
  \basis{\phi} &= \pfrac{\mvec{x}}{\phi}
  = 
  \sin\theta 
  \big[ 
  -\sin(\phi+\Omega t) \cbas{1} + \cos(\phi + \Omega t) \cbas{2} 
  \big].
\end{align}
These are then used to compute the covariant components of the metric
tensor $g_{\theta\theta} = \basis{\theta}\cdot\basis{\theta} = 1$,
$g_{\phi\phi} = \basis{\phi}\cdot\basis{\phi} = \sin^2\theta$ and
$g_{\theta\phi} = g_{\phi\theta} = \basis{\phi}\cdot\basis{\theta} =
0$. The effect of rotation is hence not
apparent in the spatial metric. Therefore, if we were to use the Hamiltonian
$H = \mvec{p}\cdot\mvec{p}/2$, the rotation would not modify the
geodesics. 

Hence, to account for the rotation, we need to modify the
Hamiltonian. We start with the Lagrangian
\begin{align}
  L 
  = \frac{1}{2} \dot{\mvec{x}}\cdot\dot{\mvec{x}}
  = \frac{1}{2}
  \big[ 
  \dot{\theta}^2 + \sin^2\theta(\dot{\phi}+\Omega)^2
  \big].
\end{align}
From this, we compute the canonical momentum
\begin{align}
  p_\theta &= \pfrac{L}{\dot{\theta}} = \dot{\theta} \\
  p_\phi &= \pfrac{L}{\dot{\phi}} = \sin^2\theta(\dot{\phi}+\Omega).
\end{align}
The Hamiltonian is then constructed using the Legendre transform
\begin{align}
  H = p_\theta\dot{\theta} + p_\phi\dot{\phi} - L.
\end{align}
Eliminating the time-derivatives of the coordinates in favor of the
canonical velocities, we finally get
\begin{align}\label{eqn:rot_sphere_hamil}
  H = \frac{1}{2}
  \bigg(
  p_\theta^2 + \frac{p_\phi^2}{\sin^2\theta}
  \bigg)
  -
  \Omega p_\phi.
\end{align}
This is the modified Hamiltonian that accounts for the rotation of the
sphere. As we have used canonical coordinates, the Poisson bracket
remains the canonical one.

An effect that can be treated similarly is frame-dragging, which 
is a general relativistic effect that occurs around a
spinning object, in which a particle gets ``dragged'' along the
direction of the spin. This is a purely relativistic effect: in
Newtonian gravity the gravitational force outside a spherically
symmetric object does not depend on the angular momentum of the
object. Yet, in relativity, the motion of a test particle does
depend on the angular momentum. This effect in stationary spacetime manifests in the Hamiltonian as a term analogous to the constant spin case here, 
where a spatially dependent ``shift vector'' replaces the constant scalar spin $\Omega$.

\subsection{Equilibrium Modifications, Geodesics and Additional Forces}
A generalization of the Hamiltonian to include background flows and static potentials represents an extended range of systems, including the uniformly rotating sphere. With static background flows $v_i(x^i)$ and potential $V(x^i)$, the Hamiltonian takes the form
\begin{align}
    H(x^i,p_i) = \frac{1}{2} g^{ij} (p_i - v_i) (p_j - v_j) + V. \label{eqn:extended_hamil_def_with_static_fields}
\end{align}
which includes, by completing the square, the uniformly rotating surface of a sphere Hamiltonian equation \ref{eqn:rot_sphere_hamil}. The equilibrium for this Hamiltonian must therefore include these background flows, which becomes
\begin{align}
    f_M(x^i,p_i) = \frac{n}{(2\pi v_{th}^2)^{d/2}} \exp ( -g^{ij} (p_i - v_i - u_i) (p_j - v_j - u_j) / (2v_{th}^2)). \label{eqn:equilibrium_with_static_potential}
\end{align}
where, in this case, the static potential has been absorbed into the definition of the density. As we will see in the subsequent section, to be discretely consistent with the characteristic velocities, we must use the Hamiltonian and its gradients to compute moments, as it is these gradients of $H$ that determine the characteristics which are used. In the continuous limit, these integral kernels for moments are identical, but they are not necessarily so when $H$ is represented discretely. To facilitate the transition into the next section, which describes the discrete scheme, the moments are written here in terms of the gradients of the Hamiltonian
\begin{align}
    n \sqrt{g} &= \int_{-\infty}^{\infty} f d\mvec{p} \label{eqn:mom_sphere_1}\\
    n \sqrt{g} u^i &= \int_{-\infty}^{\infty} \pfrac{H}{p_i} f d\mvec{p} \label{eqn:mom_sphere_2}\\
    d n \sqrt{g} v_{th}^2 &= 2 \int_{-\infty}^{\infty} H f d\mvec{p} - n \sqrt{g} (g^{ij}u_i u_j + 2 V)\label{eqn:mom_sphere_3}
\end{align}
Where the momentum components are related via the metric, $u_i = g_{ij} u^j$. 

In the case of a uniformly rotating unit sphere, the background flows and static potential take the form
\begin{align}
    v_\theta &= 0\\
    v_\phi &= \Omega \sin^2\theta \\
    V &= - \frac{1}{2} \Omega^2 \sin^2 \theta.
\end{align}
Using these expressions in the equilibrium equation that includes background flows and the static potential, equation \ref{eqn:equilibrium_with_static_potential}, as well as rewriting the Hamiltonian in the form  \ref{eqn:extended_hamil_def_with_static_fields},
\begin{align}
    H = \frac{1}{2} \left( p_\theta^2 + \frac{ (p_\phi - \Omega \sin^2 \theta)^2}{\sin^2 \theta} \right) - \frac{1}{2} \Omega^2 \sin^2 \theta.
\end{align}
gives an alternative form the Hamiltonian for the rotating sphere. A straightforward calculation of the equations of motion yields the motion along great circles (seen in the \ref{app:geodesics}), with additional terms for Coriolis and centrifugal forces
\begin{align}
    \ddot{\theta} 
    &=
    \underbrace{
     \dot{\phi}^2 \sin \theta \cos \theta 
     }_{\text{Great Circle}}
     +
     \underbrace{
       \Omega \dot{\phi}  \sin 2 \theta 
     }_{\text{Coriolis}}
     +
     \underbrace{
       \Omega^2 \sin \theta \cos \theta 
     }_{\text{Centrifugal}} 
     \\
    \ddot{\phi} 
    &=
    -
    \underbrace{
      2 \dot{\theta} \dot{\phi} \cot \theta 
     }_{\text{Great Circle}}
     -
     \underbrace{
        2 \Omega \dot{\theta} \cot \theta 
     }_{\text{Coriolis}}.
\end{align}
In these equations, the geodesic motion on the surface of a sphere, as well as additional forces, are directly incorporated into the advection equation for the system. For a sphere rotating at a uniform constant velocity including collisions, the kinetic equation takes the form
\begin{align}
    \pfrac{f}{t} 
    + 
    p_\theta \pfrac{f}{\theta} 
    + 
    \left( \frac{p_\phi}{\sin ^2 \theta} - \Omega \right)\pfrac{f}{\phi} 
    +
    \frac{p_\phi^2 \cos \theta}{\sin^3 \theta}\pfrac{f}{p_\theta} 
    =
    - \nu (f - f_M) \label{eqn:advective_rot_sphere}
\end{align}
In the limit where collisions dominate, $\lim \nu  \rightarrow \infty$, the result is that the distribution acts as a fluid, and the BGK operator replaces $f$ with $f_M$ faster than the advective timescales. Integrals of the distribution, equations \ref{eqn:mom_sphere_1}, \ref{eqn:mom_sphere_2}, \ref{eqn:mom_sphere_3}, taken over the advection equation \ref{eqn:advective_rot_sphere}, yield evolution equations for the moments of the equilibrium: Euler's equations. The integrals yield the same result as Euler's equations on the surface of a sphere in a rotating reference frame,
\begin{align}
    \pfrac{n}{t} 
    +
    \frac{1}{\sin \theta} \pfrac{}{\theta} \left( n \sin\theta u_\theta\right) 
    +
    \pfrac{}{\phi} \left( n \left(\frac{u_\phi}{\sin^2 \theta} - \Omega \right) \right) 
    =
    0  \label{eqn:continuity_sphere}
\end{align}
\begin{align}
    \pfrac{u_\theta}{t} 
    +
    u_\theta \pfrac{u_\theta}{\theta}
    +
    \frac{1}{n} \pfrac{}{\theta} \left( n v_{th}^2 \right) 
    +
    \left( \frac{u_\phi}{\sin^2 \theta}  - \Omega \right) \pfrac{u_\theta}{\phi}
    -
    u_\phi^2 \frac{\cos \theta}{\sin ^3 \theta} 
    =
    0 \label{eqn:p_theta_sphere}
\end{align}
\begin{align}
    \pfrac{u_\phi}{t} 
    +
    u_\theta \pfrac{u_\phi}{\theta}
    +
    \frac{1}{n} \pfrac{}{\phi} \left( n v_{th}^2 \right) 
    +
    \left( \frac{u_\phi}{\sin^2 \theta}  - \Omega \right) \pfrac{u_\phi}{\phi}
    =
    0 \label{eqn:p_phi_sphere}
\end{align}
Alternatively, expressed in terms of the relative velocity, $\tilde{u}_\phi = u_\phi - \Omega \sin^2 \theta$, rather than the total velocity $u_\phi$, the terms can be easily identified as either advective, pressure, centrifugal, or Coriolis:
\begin{align}
    \pfrac{n}{t} 
    +
    \frac{1}{\sin \theta} \pfrac{}{\theta} \left( n \sin\theta u_\theta \right) 
    +
    \pfrac{}{\phi} \left(\frac{n\tilde{u}_\phi}{\sin^2 \theta}\right)
    =
    0  \label{eqn:continuity_sphere_rel_vel}
\end{align}
\begin{align}
    \pfrac{u_\theta}{t} 
    +
    \underbrace{
        u_\theta \pfrac{u_\theta}{\theta}
        +
        \frac{\tilde{u}_\phi}{\sin^2 \theta} \pfrac{u_\theta}{\phi}
    }_{\text{Advective}}
    +
    \underbrace{
        \frac{1}{n} \pfrac{}{\theta} \left( n v_{th}^2 \right) 
    }_{\text{Pressure}}
    -
    \frac{\cos \theta}{\sin ^3 \theta} 
    \bigg (
        \underbrace{
            \tilde{u}_\phi^2 
            +
            \Omega^2 \sin ^4 \theta
        }_{\text{Centrifugal}}
        +
        \underbrace{
            2\Omega\tilde{u}_\phi \sin^2 \theta
        }_{\text{Coriolis}}
    \bigg )
    =
    0 \label{eqn:p_theta_sphere_rel_vel}
\end{align}
\begin{align}
    \pfrac{\tilde{u}_\phi}{t} 
    +
    \underbrace{
        u_\theta \pfrac{\tilde{u}_\phi}{\theta}
        +
        \frac{\tilde{u}_\phi}{\sin^2 \theta} \pfrac{\tilde{u}_\phi}{\phi}
    }_{\text{Advective}}
    +
    \underbrace{
        \frac{1}{n} \pfrac{}{\phi} \left( n v_{th}^2 \right) 
    }_{\text{Pressure}}
    +
    \underbrace{
        2\Omega u_\theta \sin \theta \cos \theta
    }_{\text{Coriolis}}
    =
    0. \label{eqn:p_phi_sphere_rel_vel}
\end{align}
This type of procedure establishes the formal connection between kinetic theory and fluid theories for neutrals in the highly collisional limit. Additional effects, such as contributions from electromagnetic fields, can similarly be considered. In addition to single particle motion, the highly collisional regime gives us another useful limit against which we can test the canonical Poisson bracket solver. 
% -*- latex -*-
\section{A Discontinuous Galerkin Scheme and Iterative Corrections for
  Collisions}\label{sec:scheme}

\subsection{The Discontinuous Galerkin Scheme}
Here, we consider the properties of the general discrete Poisson Bracket updater with an arbitrary, potentially non-canonical Poisson Tensor, $\gvec{\Pi}$, as well as the special case where the Poisson Tensor equals the antisymmetric unit tensor, $\gvec{\Pi} = \gvec{\sigma}$, i.e., the canonical case\footnote{For discussions of the numerical properties of related Hamiltonian systems such as Vlasov-Poisson, incompressible Euler, and gyrokinetic schemes, see \cite{hakim2019discontinuous}.}. We will demonstrate, in general, that the updater conserves the total number of particles as well as the total energy. Further, we show that the canonical case conserves the $L_2$ norm for central fluxes or monotonically decays the $L_2$ norm when using upwind fluxes. For the analysis, we only consider regions of the geometry away from coordinate singularities, where the coordinates and the associated basis are well defined.

We begin by discretizing the phase space into cells, $V_n$, where the index $n$ spans $n = 1 \ldots N$, and $N$ represents the total number of phase space cells. Collectively, these cells form a phase-space mesh $\mathcal{T}$. To discretize the evolution, equation \ref{eq:f-hamil}, we introduce a piecewise-polynomial space, $\mathcal{V}_h^p$, to which the distribution function, $f(t,z^i)$, belongs. The subscript $h$ represents quantities that belong discretely to the modal space. The piecewise polynomial space exists on the phase-space mesh $\mathcal{T}$,
\begin{align}
    \mathcal{V}_h^p = \{ w:w|_{V_n} \in \mvec{P}^p,\forall V_n \in \mathcal{T} \}.
\end{align}
This expression is read as a piecewise polynomial space, $\mathcal{V}_h^p$, which contain $w(\mvec{z})$ as a  polynomial functions in each cell $V_n$ and belongs to the space of linear combinations of basis polynomials, $\mvec{P}^p$ for the entire phase space mesh. Lowercase $\mvec{z}$ is the phase space within each cell. The distribution, represented as piecewise polynomials, is therefore allowed to be discontinuous. The polynomial space $\mvec{P}^p$ in each cell is spanned by the Lagrange tensor basis or the Serendipity basis \cite{arnold2011serendipity}, or some other suitable set of functions. All simulations presented in this paper use the Serendipity basis.

In contrast to the discontinuous representation of the distribution function, the Hamiltonian, as we show below for discrete energy conservation,  must be continuous. The discrete continuous space is denoted as $\mathcal{W}_{0,h}^p$,
\begin{align}
    \mathcal{W}_{0,h}^p =  \mathcal{V}_h^p \cap C^0(\mvec{Z})
\end{align}
where $\mvec{Z}$ is the entire phase-space domain and $C^0(\mvec{Z})$ is the set of all continuous functions. Further, for non-canonical systems, the discrete Poisson tensor components, $\Pi^{ij}_h$, must also be continuous, $\Pi^{ij}_h \in \mathcal{W}_{0,h}^p$ to ensure discrete energy conservation. We note that the antisymmetry property of the Poisson tensor is preserved through the projection of the components onto the modal basis, $\Pi^{ij}_h = - \Pi^{ji}_h$.

The DG scheme can then be formulated as finding the time evolution of $\mathcal{J} f$ when projected onto the basis, that is $(\mathcal{J}f)_h \in \mathcal{V}_h^p$. To find the evolution of $(\mathcal{J}f)_h$ we begin with the Vlasov equation \ref{eq:f-hamil} and work to rewrite it in the discrete-weak form. We multiply the continuous Vlasov equation by $\mathcal{J}$ and move $\mathcal{J}$ into the time derivative since we are strictly examining cases that the geometry is independent of time. Using the divergence form of the bracket from equation \ref{eq:pb-div}, the Vlasov equation becomes an evolution equation for $\mathcal{J}f$,
\begin{align}
    \pfrac{}{t} \left( \mathcal{J} f \right)
    + 
    \pfraca{z^i}
    \left(
    \mathcal{J} f \Pi^{ij} \pfrac{H}{z^j}
    \right) = 0.\label{eqn:Jf_evo}
\end{align}
To derive the discrete weak form, we multiply by the test function $w$ and integrate over phase space over each cell $V_n$ to get 
\begin{align}\label{eqn:discrete_weak_form_pre_IBP}
    \int_{V_n} w 
    \left( 
    \pfrac{}{t} \left( \mathcal{J} f \right)
    + 
    \pfraca{z^i}
    \left(
    \mathcal{J} f \Pi^{ij} \pfrac{H}{z^j} 
    \right)  
    \right) d\mvec{z}
    = 0.
\end{align}

Integration by parts on equation \ref{eqn:discrete_weak_form_pre_IBP} moves the derivatives off of $\mathcal{J}f\Pi^{ij} \pfrac{H}{z^j}$ and onto the test functions, $w$. Applying divergence theorem gives a surface flux term.\footnote{The $d\mvec{z}$ computational coordinates are rectangular. Therefore, when we apply the divergence theorem, we do not obtain extra Jacobian factors inside the integrals.} The resulting discretized equation is referred to as the discrete-weak form of the Vlasov equation, 
\begin{align}
      \int_{V_n} w  
      \frac{\partial}{\partial t} \left( \mathcal{J} f \right)_h d\mvec{z}
      &+
      \sum_m^{2d} \int_{S_n^m} 
      w^- \mathcal{J}_h \{ z^m, H_h\}_h \hat{f} d\mvec{z}^\slashed{m}\bigg |^{z_n^m+\frac{dz_n^m}{2}}_{z_n^m-\frac{dz_n^m}{2}} \\ \notag
      &-
      \int_{V_n} 
      \{ w,H_h \}_h(\mathcal{J} f)_h d\mvec{z} 
      = 
      0.
      \label{eqn:discrete_weak_form}
\end{align}
Where we have projected the continuous functions onto the modal basis, which is indicated by the subscript $h$ and where $z^m_n$ is the $n$-cell centered location, and $dz^m_n$ is the difference in direction of $m$. The sum over $m$ is the sum over all of the phase space cell faces. $d\mvec{z} = dz^1...dz^{2d}$ is the computational-cell volume element, and $d\mvec{z}^\slashed{m} = dz^1...dz^{2d}/dz^m$ is the cell surface area element in the direction of $m$. The characteristic velocity component in direction $m$ is $\alpha_h^m = \{ z^m, H_h \}_h$. The superscripts $w^{+}$ and $w^{-}$ denote functions evaluated just outside and inside the cell's surface, respectively. The grouping of $(\mathcal{J} f)_h$ in parentheses in equation \ref{eqn:discrete_weak_form} indicates that this term is combined and projected onto $\mathcal{V}_h^p$. 

The volume term has been discretized in terms of the discretely projected Poisson bracket, and by further utilizing weak identities, it has been combined into a single projection
\begin{align}
    \{ w, H_h\}_h = \left( \pfrac{w}{z^i} \Pi^{ij}_h\pfrac{H_h}{z^j} \right)_h.
\end{align}

In addition, we have introduced a flux function in the surface term, $\hat{f}$. The use of the flux function ensures flux agreement across the cell boundary, which is required for maintaining discrete conservation laws. For the numerical flux function, we use a Lax flux 
\begin{align}\label{eqn:general_flux_function}
    \alpha_h^m \hat{f}_h
    =
    \alpha_h^m \left( \frac{ f^+_h + f^-_h }{2} \right)
    -
    \frac{\lambda}{2}
    \left(
    f^+_h - f^-_h 
    \right)
\end{align}
where we consider the two cases of a central flux $\lambda = 0$, or an upwind flux $\lambda = |\alpha_h^m|$. We have isolated $f$ and $\mathcal{J}$ for the surface term by the weak equality
\begin{align}
    (f \mathcal{J})_h \doteq f_h \mathcal{J}_{h}.
\end{align}

The volume integral terms are treated by precomputing the volume integral and unrolling the loops with a computer algebra solver (CAS) to avoid matrix multiplication. Details on eliminating aliasing errors and removing matrix multiplication and quadrature methods in the algorithm we present here are described in\cite{hakim2020alias}. The time advancement is computed discretely using a strong stability-preserving (SSP) third-order Runge-Kutta scheme\cite{shu1988efficient}.

% Adapted from AH/GH/ES/NM DG schemes Hamil. Evo.
\subsection{Discrete Conservation and Stability Properties}

% 1. Comment about how f has to go to zero at all boundaries, periodic may not be applicable.
Hamiltonian and Poisson tensor continuity guarantees that the characteristic velocity normal component to a phase-space cell is continuous. We can thus equate components on both sides of the cell interface, $\alpha^{m+}_h = \alpha^{m-}_h = \alpha^{m}_h$. We prove this continuity of the normal component of the characteristic velocity now. 

\begin{lemma}
Components of the characteristic velocity normal to the face of a phase-space cell are continuous. \label{lemma:normal_cont}
\end{lemma}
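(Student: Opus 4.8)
The plan is to evaluate the component of the characteristic velocity normal to a given cell face and to show that its one-sided traces from the two adjacent cells coincide. Fix a face $S$ separating two cells and oriented normal to the $z^m$ direction, so that $S\subset\{z^m=\text{const}\}$. Since $z^m$ is an affine coordinate, $\pfracb{z^m}{z^i}=\delta^m_i$, and the defining expression $\alpha_h^m=\{z^m,H_h\}_h$ collapses to a single row of the Poisson tensor contracted against the Hamiltonian gradient,
\begin{align}
  \alpha_h^m=\Pi_h^{mj}\pfrac{H_h}{z^j}.
\end{align}

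First I would invoke the antisymmetry of the projected Poisson tensor, $\Pi_h^{mj}=-\Pi_h^{jm}$, which forces the diagonal entry to vanish, $\Pi_h^{mm}=0$. The crucial observation is that the $j=m$ term is the \emph{only} one in the sum carrying the normal derivative $\pfrac{H_h}{z^m}$; because $H_h$ is merely $C^0$, this normal derivative is precisely the quantity permitted to jump across $S$. Antisymmetry annihilates exactly this term, leaving
\begin{align}
  \alpha_h^m=\sum_{j\neq m}\Pi_h^{mj}\pfrac{H_h}{z^j},
\end{align}
so that $\alpha_h^m$ is built only from the off-diagonal components $\Pi_h^{mj}$ and from derivatives of $H_h$ taken tangent to $S$.

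Next I would show that each surviving factor is continuous across $S$. The components $\Pi_h^{mj}$ belong to the continuous space $\mathcal{W}_{0,h}^p$ (in the canonical case they are the constant entries of $\gvec{\sigma}$), so they take a single value on $S$. For the derivatives, I would use that $H_h\in\mathcal{W}_{0,h}^p$ is continuous across $S$: writing $H_h^-$ and $H_h^+$ for its two one-sided polynomial representations, continuity means $H_h^-=H_h^+$ at every point of $S$. Because the coordinates $z^j$ with $j\neq m$ parametrize the face $S$, differentiating this identity along $S$ gives $\pfrac{H_h^-}{z^j}=\pfrac{H_h^+}{z^j}$ on $S$ for every $j\neq m$. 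Hence every factor in the reduced sum agrees from both sides, so $\alpha_h^{m+}=\alpha_h^{m-}$, which is the claim.

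The main obstacle, and the heart of the argument, is recognizing that antisymmetry is doing the essential work: it eliminates precisely the one contribution, the normal derivative of the only-$C^0$ Hamiltonian, that is free to be discontinuous, while every remaining ingredient is either a continuous tensor component or a tangential derivative of a continuous function. A secondary technical point I would address is the weak projection $(\cdot)_h$ appearing in the definition of $\{z^m,H_h\}_h$: in the canonical case the constant entries of $\gvec{\sigma}$ factor out and $\alpha_h^m$ reduces to a single partial derivative of $H_h$, which already lies in the polynomial space so that no reprojection is needed; in the non-canonical case the product $\Pi_h^{mj}\,\pfrac{H_h}{z^j}$ must be re-expanded in the basis, and I would note that it is the surface trace entering the numerical flux to which the tangential-derivative argument applies.
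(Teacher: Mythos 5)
Your proof is correct and is essentially the same argument as the paper's: the paper contracts $\alpha^i_h$ with a general face normal $n_i$, sets $\tau^j = n_i\Pi^{ij}_h$, and uses antisymmetry ($n_i\Pi^{ij}_h n_j = 0$) to conclude that only tangential gradients of the continuous $H_h$ and the continuous $\Pi^{ij}_h$ enter, which is exactly your observation that $\Pi^{mm}_h = 0$ kills the lone normal-derivative term on a coordinate-aligned face. Your closing remark about the weak projection $(\cdot)_h$ is a sensible extra precaution that the paper's proof leaves implicit.
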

\begin{proof}
For a discrete Hamiltonian system, the Poisson bracket is given by,
\begin{align}
    \{f_h, g_h \} = \pfrac{f_h}{z^i} \Pi^{ij}_h \pfrac{g_h}{z^j}
\end{align}
where $\Pi^{ij}_h$ is the discretely represented Poisson tensor. The characteristic velocity, $\alpha^i_h = \{z^i,H_h\}_h$, can be written as $\alpha^i_h = \Pi^{ij}_h \pfrac{H_h}{z^j}$. Let $n_i$ be the components of the unit vector normal to a cell surface. Contracting the normal vector with the characteristic speed,
\begin{align}
    n_i\alpha^i_h
    =
    n_i \Pi^{ij}_h \pfrac{H_h}{z^j}
    =
    \tau^j \pfrac{H_h}{z^j}, \label{eqn:n_dot_alpha}
\end{align}
where $\tau^j = n_i\Pi^{ij}_h$. Antisymmetry of the Poisson tensor gives
\begin{align}
    \tau^i n_i = \gvec{\tau} \cdot \gvec{n}= n_i\Pi^{ij}_hn_j = 0
\end{align}
This relation shows that $\gvec{\tau}$ is orthogonal to $\mvec{n}$ and tangent to the cell surface. Thus, $n_i\alpha^i_h$ depends only on gradients of the Hamiltonian that are tangent to the surface and not on gradients normal to the surface, which may be discontinuous.

Hence, since the discrete Hamiltonian belongs to $\mathcal{W}_{0,h}^p$, its tangential gradients are continuous. Likewise, the Poisson tensor belongs to $\mathcal{W}_{0,h}^p$ and is also continuous, hence,
\begin{align}
    n_i \alpha^i_h 
    = 
    n_i \Pi^{ij+}_h \pfrac{H_h^+}{z^j} 
    = 
    n_i\Pi^{ij-}_h \pfrac{H_h^-}{z^j}.
\end{align}
Therefore, the normal components of the characteristic velocity are also continuous across the cell interface.
\end{proof}

The importance of the continuity of the characteristics of the discrete system is that this property is needed to prove the conservation of total energy. In addition, this property is needed to prove the conservation or monotonic decay of the $L_2$ norm when the phase space is incompressible within the cell volume. Phase space incompressibility is defined as the characteristics being divergence free $\mvec{\nabla} \cdot \gvec{\alpha} = 0$, which for the general discrete system is written as
\begin{align}
    \mvec{\nabla} \cdot \gvec{\alpha}_h = \frac{1}{\mathcal{J}_h}\pfrac{}{z^i} \left( \mathcal{J}_h \Pi^{ij}_h \pfrac{H_h}{z^j} \right). \label{eqn:phase_space_incompressibility_general}
\end{align}

Expanding the divergence on the right hand side of equation \ref{eqn:phase_space_incompressibility_general} into two terms,
\begin{align}
    \frac{1}{\mathcal{J}_h}\pfrac{}{z^i} \left( \mathcal{J}_h \Pi^{ij}_h \pfrac{H_h}{z^j} \right)
    =
    \frac{1}{\mathcal{J}_h} \underbrace{
         \left( \mathcal{J}_h \Pi^{ij}_h \right)  \frac{ \partial^2  H_h}{ \partial z^i \partial z^j}
    }_{= 0}
    +
    \frac{1}{\mathcal{J}_h} \underbrace{ \pfrac{}{z^i} \left( \mathcal{J}_h \Pi^{ij}_h  \right) }_{\text{Liouville Identities}} \pfrac{H_h}{z^j}.
\end{align}
The first term on the right hand side is zero within the cell volume due to the antisymmetry of the Poisson tensor and the symmetry under swapping the indices of the partial derivatives of the Hamiltonian. The second term, however, is not necessarily zero due to the projections of the Jacobian factor and Poisson tensor onto the basis functions. Recall that prior to being projected onto bases, the underlined quantity obeys the Liouville identities; hence, this term is continuously zero. Discretely, the projections may not preserve the Liouville identities for the non-canonical system; therefore, we cannot generally prove phase space incompressibility in the non-canonical case.

However, phase space incompressibility is provable for the canonical case when $\gvec{\Pi} = \gvec{\sigma}$. This is formalized in the following lemma and proof.

\begin{lemma}
In the canonical case, the phase space is incompressible within the cell volume. \label{lemma:ps_incompress}
\end{lemma}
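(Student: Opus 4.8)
The plan is to specialize the general expression for the discrete divergence of the characteristics, \eqr{\ref{eqn:phase_space_incompressibility_general}}, to the canonical case $\gvec{\Pi} = \gvec{\sigma}$ and to show that both terms in its expansion vanish identically within each cell. The essential observation is that the obstruction present in the non-canonical analysis—namely that projecting $\mathcal{J}\Pi^{ij}$ onto the finite-dimensional polynomial basis may spoil the Liouville identities—simply does not arise here, because the relevant geometric factors are exactly constant.

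First I would record that in the canonical case the Poisson tensor $\gvec{\sigma}$ has constant entries ($0$ and $\pm 1$) and the Jacobian is $\mathcal{J} = 1/\sqrt{\det\gvec{\sigma}} = 1$. Since the polynomial space $\mathcal{V}_h^p$ contains the constants, projection acts as the identity on them, so $\mathcal{J}_h = 1$ and $\Pi^{ij}_h = \sigma^{ij}$ exactly, with no projection error. Consequently the product $\mathcal{J}_h\Pi^{ij}_h = \sigma^{ij}$ is exactly constant in each cell.

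Next I would substitute these facts into the two-term split of the divergence used for the general system. The Liouville-identity term becomes $\pfraca{z^i}(\mathcal{J}_h\Pi^{ij}_h) = \pfraca{z^i}\sigma^{ij} = 0$, which holds trivially because $\sigma^{ij}$ is constant; thus the discrete Liouville identities are satisfied exactly in the canonical case, in contrast to the non-canonical one. The remaining term is $\sigma^{ij}\,\partial^2 H_h/\partial z^i\partial z^j$, which vanishes because it contracts the antisymmetric tensor $\sigma^{ij}$ with the symmetric Hessian of $H_h$; the Hessian is symmetric since $H_h$ is a polynomial within the cell and mixed partial derivatives commute. Adding the two contributions gives $\mvec{\nabla}\cdot\gvec{\alpha}_h = 0$ within each cell, establishing incompressibility.

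The step that most deserves care is not a computation but the justification of why the projection is harmless: I would emphasize that the failure mode identified for the general system—loss of the Liouville identities after projection onto $\mathcal{V}_h^p$—is avoided precisely because $\mathcal{J}\Pi^{ij}$ reduces to a constant matrix whose projection is exact. This is the structural reason the canonical formulation yields exact discrete incompressibility within the cell, and, downstream, underlies the exact $L_2$ and energy conservation properties that the non-canonical formulation cannot guarantee.
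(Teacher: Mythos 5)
Your proposal is correct and follows essentially the same route as the paper: specialize to $\mathcal{J}_h = 1$, $\Pi^{ij}_h = \sigma^{ij}$, exploit the constancy of $\sigma^{ij}$, and kill the remaining term by contracting the antisymmetric $\sigma^{ij}$ with the symmetric Hessian of the polynomial $H_h$. Your added observation that the projection is exact on constants (so the discrete Liouville identities hold trivially in the canonical case) is a slightly more explicit justification of a step the paper takes for granted, but it is the same argument.
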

\begin{proof}
In general, the discrete divergence of the phase space of the Hamiltonian system is,
\begin{align}
    \frac{1}{\mathcal{J}_h}\pfrac{}{z^i} \left( \mathcal{J}_h \Pi^{ij}_h \pfrac{H_h}{z^j} \right).
\end{align}
This simplifies in the canonical case where $\mathcal{J}_h = 1$ and $\Pi^{ij}_h = \sigma^{ij}$ to,
\begin{align}
    \frac{1}{\mathcal{J}_h}\pfrac{}{z^i} \left( \mathcal{J}_h \Pi^{ij}_h \pfrac{H_h}{z^j} \right)
    =
    \pfrac{}{z^i} \left( \sigma^{ij} \pfrac{H_h}{z^j} \right). \label{eqn:phase_space_incompressibility}
\end{align}
where $\sigma^{ij}$ is the antisymmetric unit tensor. 

As $\sigma^{ij}$ is constant, it may be pulled out of the derivative. 
\begin{align}
    \pfrac{}{z^i} \left( \sigma^{ij} \pfrac{H_h}{z^j} \right)
    =
    \sigma^{ij}  \frac{ \partial^2  H_h}{ \partial z^i \partial z^j}
\end{align}
Within the volume, $H_h$ is composed of a polynomial basis and is therefore smooth within each cell. Thus, the second partials of the discrete Hamiltonian are symmetric under the swapping of the indices. Using the symmetric property of the partials and the antisymmetry, $\sigma^{ij} = - \sigma^{ji}$, and then relabeling, the right hand side of equation \ref{eqn:phase_space_incompressibility} becomes zero
\begin{align}
    \sigma^{ij}  \frac{ \partial^2  H_h}{ \partial z^i \partial z^j}
    =
    - \sigma^{ij}  \frac{ \partial^2  H_h}{ \partial z^i \partial z^j}
    = 0.
\end{align}
Hence, $\mvec{\nabla} \cdot \gvec{\alpha} = 0$, and the phase space is incompressible in the canonical limit.
\end{proof}

Looking forward, Lemma \ref{lemma:normal_cont} is required to prove the conservation law for the total energy. In addition to Lemma \ref{lemma:normal_cont}, Lemma \ref{lemma:ps_incompress} is then used to prove stability properties for the canonical system regarding the $L_2$ norm conservation for central fluxes, or the monotonic decay of the $L_2$ norm for up-winded fluxes.

For the subsequent proofs, we will assume the boundary surface terms vanish. Example boundary conditions which satisfy this assumption include periodic and reflecting wall boundary conditions in configuration space and zero flux boundary conditions in velocity space. 

\begin{proposition}
    The total number of particles is conserved exactly.
\end{proposition}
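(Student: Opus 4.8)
The plan is to test the discrete weak form against the constant function $w = 1$, which is admissible since constants belong to $\mathcal{V}_h^p$, and then to sum the resulting per-cell balance over the entire mesh $\mathcal{T}$. With this choice the total particle number is $\sum_n \int_{V_n} (\mathcal{J}f)_h \, d\mvec{z}$, so the volume time-derivative term collapses to $\frac{d}{dt}\sum_n \int_{V_n} (\mathcal{J}f)_h \, d\mvec{z}$, which is precisely the rate of change we wish to show vanishes.

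First I would dispose of the volume bracket term. Since the discrete bracket $\{w, H_h\}_h$ is the projection of $\pfrac{w}{z^i}\, \Pi^{ij}_h\, \pfrac{H_h}{z^j}$, and the derivatives of a constant vanish identically, the integrand is exactly zero before projection and the term drops out cell by cell. This isolates the balance to the time-derivative term and the surface flux term alone.

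The crux is the surface term $\sum_m \int_{S_n^m} \mathcal{J}_h \{z^m, H_h\}_h \hat{f} \, d\mvec{z}^{\slashed{m}}$ evaluated at the upper and lower faces. Summing over all cells, each interior face is shared by two neighbouring cells and appears twice with opposite orientation (as an upper face of one cell and the lower face of the other), while the interior trace $w^- = 1$ is the same from either side. The two contributions cancel provided the normal flux coefficient $\mathcal{J}_h\, \alpha^m_h\, \hat{f}$ is single-valued on the shared face. This is exactly what Lemma~\ref{lemma:normal_cont} secures for the normal characteristic speed $\alpha^m_h$; the numerical flux $\hat{f}$ of \eqref{eqn:general_flux_function} is built symmetrically from $f^+_h$ and $f^-_h$ and is therefore single-valued; and the Jacobian factor $\mathcal{J}_h$ (unity in the canonical case) is single-valued on faces as well. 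Hence all interior-face contributions annihilate pairwise.

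What remains are surface contributions on the boundary of the domain only, and these vanish by the standing assumption on the boundary conditions (periodic, reflecting-wall, or zero-flux). Collecting the pieces yields $\frac{d}{dt}\sum_n \int_{V_n} (\mathcal{J}f)_h \, d\mvec{z} = 0$, i.e. exact conservation of the total particle number. I expect the only delicate point to be the interior-face cancellation: it rests entirely on Lemma~\ref{lemma:normal_cont} together with the single-valuedness of the Lax flux, so I would argue carefully that it is the orientation flip between neighbouring cells that supplies the sign making the two traces cancel, rather than merely agree.
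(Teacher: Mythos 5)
Your proposal is correct and follows essentially the same route as the paper's own proof: set $w=1$, note the volume bracket term vanishes since the derivative of a constant is zero, and sum over cells so that interior-face flux contributions cancel pairwise by the sign flip of the outward normal together with the single-valuedness of the numerical flux, which rests on Lemma~\ref{lemma:normal_cont} (and continuity of $\mathcal{J}_h$ in the non-canonical case). Your treatment is somewhat more explicit than the paper's about exactly which factors must be single-valued on a shared face, but the argument is the same.
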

\begin{proof}
    Conservation of the total number of particles follows immediately upon replacing $w = 1$ in the discrete weak form, equation \ref{eqn:discrete_weak_form}, and summing over all cells, leading to the equation
    \begin{align}
        \pfrac{}{t}\sum_{V_n 
        \in
        \mathcal{T}} \int_{V_n} (\mathcal{J}f)_h d\mvec{z} = 0.
        \hfill 
    \end{align} 
    Where, by summing over all cells, the contribution from the surface integral term of the discrete weak form drops, as the integrand differs only in sign for the two cells sharing a face. The cancellation relies on the continuity of the flux function. In our case, this is ensured from Lemma \ref{lemma:normal_cont}. The outward normal differs between cells only by a sign, but its magnitude is shared. Finally, the volume term of the discrete weak form is zero, as the derivative of $w = 1$ is zero.
    \qedhere
\end{proof}

\begin{proposition}
    The spatial scheme conserves total energy exactly.
\end{proposition}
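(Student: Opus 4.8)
The plan is to test the discrete weak form \ref{eqn:discrete_weak_form} against the Hamiltonian itself, exploiting the fact that $H_h$ is a legitimate member of the broken polynomial space (its restriction to each cell is a polynomial) and is, by construction, globally continuous. The total discrete energy is $\mathcal{E}_h = \sum_{V_n \in \mathcal{T}} \int_{V_n} (\mathcal{J}f)_h H_h \, d\mvec{z}$, and since the geometry---and hence $H_h$---is time-independent, its rate of change is $d\mathcal{E}_h/dt = \sum_n \int_{V_n} H_h \, \partial_t (\mathcal{J}f)_h \, d\mvec{z}$. This is precisely the first term of the weak form with the choice $w = H_h$, so establishing energy conservation amounts to showing that the remaining (volume and surface) terms of the weak form sum to zero over the mesh.

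First I would dispose of the volume term. Setting $w = H_h$ yields the integrand $\{H_h, H_h\}_h (\mathcal{J}f)_h$, and since the discrete bracket is $\{H_h, H_h\}_h = (\partial_{z^i} H_h \, \Pi^{ij}_h \, \partial_{z^j} H_h)_h$, the contraction of the symmetric object $\partial_{z^i} H_h \, \partial_{z^j} H_h$ with the antisymmetric tensor $\Pi^{ij}_h$ vanishes pointwise. The projection of zero is zero, so the volume contribution drops identically in every cell, for both the canonical and non-canonical tensors and independently of the flux choice.

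The heart of the argument---and the step I expect to be the main obstacle---is showing that the surface terms cancel pairwise across every interior face. With $w = H_h$, each face of cell $V_n$ carries the energy flux $H_h^- \, \mathcal{J}_h \{z^m, H_h\}_h \, \hat{f}$. Three facts must combine: (i) $H_h \in \mathcal{W}_{0,h}^p$ is continuous, so $H_h^- = H_h^+$ at the face and the weighting is unambiguous---this is exactly the reason the scheme demands a continuous Hamiltonian; (ii) the normal characteristic speed $\mathcal{J}_h\{z^m,H_h\}_h$ is single-valued across the face by Lemma \ref{lemma:normal_cont}, together with the continuity of the projected geometric factors $\mathcal{J}_h$ and $\Pi^{ij}_h$; and (iii) the numerical flux $\hat{f}$ of \ref{eqn:general_flux_function} is, by construction, a single function of the two interface states. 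Consequently the entire integrand $H_h \mathcal{J}_h \{z^m,H_h\}_h \hat{f}$ is identical when evaluated from the two cells sharing the face, while the outward orientations enter with opposite sign; the two contributions therefore cancel. Summing over the mesh leaves only boundary faces, which vanish under the stated boundary conditions.

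Putting these pieces together gives $d\mathcal{E}_h/dt = 0$. I would close by emphasizing two structural points: the result is independent of the Lax-flux parameter $\lambda$, so energy is conserved for both central and upwind fluxes, and the argument nowhere invokes canonicity, so it holds for the general (possibly non-canonical) Poisson tensor. The delicate ingredient throughout is the continuity of $H_h$: without it, the per-cell trace $H_h^-$ would differ across a face and the surface cancellation---and with it exact energy conservation---would fail.
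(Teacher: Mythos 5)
Your proof is correct and follows essentially the same route as the paper: choosing $w = H_h$ in the discrete weak form, killing the volume term via the antisymmetry of $\Pi^{ij}_h$, and cancelling the surface terms pairwise using the continuity of $H_h$, the normal characteristics (Lemma \ref{lemma:normal_cont}), $\mathcal{J}_h$, and the single-valued numerical flux. Your added remarks---that the result is independent of the Lax-flux parameter $\lambda$ and does not require canonicity---are consistent with the paper's (implicitly general) statement.
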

\begin{proof}
    Selecting the test function $w = H_h$ in the discrete weak form \ref{eqn:discrete_weak_form}, we have,
    \begin{align}
        \int_{V_n} H_h  
        \frac{\partial}{\partial t} \left( \mathcal{J} f \right)_h d\mvec{z}
        &+
        \sum_m^{2d} \int_{S_n^m} 
        H_h^- \mathcal{J}_h \{ z^m, H_h\}_h \hat{f} d\mvec{z}^\slashed{m}\bigg |^{z_n^m+\frac{dz_n^m}{2}}_{z_n^m-\frac{dz_n^m}{2}} \\ \notag
        &-
        \int_{V_n} 
        \{ H_h,H_h \}_h(\mathcal{J} f)_h d\mvec{z} 
        = 
        0.
        \label{eqn:discrete_weak_form_with_Hh}
    \end{align}
    The volume term vanishes because the Poisson tensor is antisymmetric, 
    \begin{align}
        \{H_h,H_h\}_h 
        = 
        \left( \pfrac{H_h}{z^i} \Pi^{ij} \pfrac{H_h}{z^j} \right)_h 
        = 
        0. 
    \end{align}
    Physically, the last term vanishes because the flow $\gvec{\alpha}_h$ is along contours of constant energy in phase space. Summing over all cells, the contribution from the surface integral term drops, as the integrand differs only in sign for the two cells sharing a face. Here, the cancellation relies on the continuity of $H_h$, normal characteristics, the flux function, and $\mathcal{J}_h$. The normal differs between cells only by a sign, but its magnitude is shared. Thus, we obtain a conservation of total energy,
    \begin{align}
        \sum_{V_n
        \in
        \mathcal{T}} \int_{V_n} H_h  \pfrac{}{t} 
        \left( \mathcal{J} f \right)_h  d\mvec{z} 
        = 0.
        \hfill 
    \end{align}\qedhere
\end{proof}

\begin{proposition}
    In the canonical case, the spatial scheme exactly conserves the $L_2$ norm of the distribution function when using a central flux, while the distribution-function $L_2$ norm monotonically decays when using an upwind flux.
\end{proposition}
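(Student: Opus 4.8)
The plan is to test the discrete weak form \eqref{eqn:discrete_weak_form} against the distribution function itself, $w = f_h$. In the canonical case $\mathcal{J}_h = 1$, so $(\mathcal{J}f)_h = f_h$ and the temporal term collapses immediately to $\tfrac12\partial_t\int_{V_n} f_h^2\,d\mvec{z}$, the per-cell contribution to the $L_2$ norm. The entire argument then rests on rewriting the volume term $\int_{V_n}\{f_h,H_h\}_h\,f_h\,d\mvec{z}$ as a surface flux so that it can be combined with the genuine surface (numerical-flux) term.

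To handle the volume term I would first drop the projection: since $f_h\in\mathcal{V}_h^p$ and the $L_2$ projection onto $\mathcal{V}_h^p$ is self-adjoint, testing the projected bracket against $f_h$ equals testing the raw polynomial bracket, so that $\int_{V_n} f_h\{f_h,H_h\}_h\,d\mvec{z} = \int_{V_n} f_h\,\sigma^{ij}\partial_{z^i}f_h\,\partial_{z^j}H_h\,d\mvec{z}$. Within a cell all fields are smooth polynomials, so the product rule together with the antisymmetry of $\sigma^{ij}$ gives $f_h\{f_h,H_h\} = \{\tfrac12 f_h^2,H_h\}$. Invoking Lemma \ref{lemma:ps_incompress}, which guarantees $\partial_{z^i}\alpha_h^i = 0$ in the canonical case, I can then write this quadratic bracket as the exact divergence $\partial_{z^i}\big(\tfrac12 f_h^2\,\alpha_h^i\big)$, precisely as in \eqref{eq:pb-div}, and apply the divergence theorem to convert it into an integral over the cell faces carrying the interior trace $f_h^-$.

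With this conversion the per-cell balance becomes $\tfrac12\partial_t\int_{V_n} f_h^2\,d\mvec{z} + \oint_{\partial V_n}\alpha_h^m\big(f_h^-\hat f_h - \tfrac12 (f_h^-)^2\big)\,dS = 0$. I would then sum over all cells and collect the two contributions meeting at each interior face. Here Lemma \ref{lemma:normal_cont} is indispensable: the normal characteristic speed $\alpha_h^m$ is single-valued across the face, so both neighbours see the same $\alpha_h^m$ and the same Lax flux $\alpha_h^m\hat f_h$ from \eqref{eqn:general_flux_function}. Writing the two interface traces as $f^-$ and $f^+$, the outward normals of the two cells differ only in sign, and a short manipulation reduces the net interface integrand to $(f^+-f^-)\big(-\alpha_h^m\hat f_h + \tfrac12\alpha_h^m(f^++f^-)\big)$; substituting the Lax flux collapses the parenthesis to $\tfrac{\lambda}{2}(f^+-f^-)$, leaving exactly $\tfrac{\lambda}{2}\int_S (f^+-f^-)^2\,dS$ per face.

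Summing over all faces, and using the assumed vanishing of boundary terms, yields $\frac{d}{dt}\big(\tfrac12\sum_n\int_{V_n} f_h^2\,d\mvec{z}\big) = -\sum_{\text{faces}}\tfrac{\lambda}{2}\int_S (f^+-f^-)^2\,dS$. For the central flux $\lambda=0$ the right-hand side vanishes and the $L_2$ norm is conserved exactly; for the upwind flux $\lambda=|\alpha_h^m|\ge 0$ it is manifestly nonpositive, so the $L_2$ norm decays monotonically, with the squared jumps quantifying the numerical dissipation. I expect the volume-term step to be the main obstacle: one must justify discarding the projection through Galerkin orthogonality and, crucially, use the \emph{discrete} incompressibility of Lemma \ref{lemma:ps_incompress} so that the quadratic bracket is an \emph{exact} divergence rather than an approximate one. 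This is exactly where the canonical hypothesis $\gvec{\Pi}=\gvec{\sigma}$ enters and where the non-canonical case breaks down, since there the projected Liouville identities need not hold.
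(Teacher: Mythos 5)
Your proposal is correct and follows essentially the same route as the paper's proof: testing the discrete weak form with $w = f_h$, using the weak/projection identity (valid because $\gvec{\sigma}$ is constant) to unpack the projected bracket, invoking Lemma \ref{lemma:ps_incompress} to write $f_h\{f_h,H_h\}$ as an exact divergence, and pairing face contributions via Lemma \ref{lemma:normal_cont}. The only cosmetic difference is that you treat both fluxes at once by deriving the per-face dissipation $\tfrac{\lambda}{2}\int_S (f^+-f^-)^2\,dS$ directly from the Lax flux, whereas the paper handles the central and upwind cases in two separate calculations; the results agree (the paper's $\tfrac{1}{4}|\alpha_h^m|$ counts each interior face twice in its cell sum).
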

\begin{proof}
    Substituting the test function $w = f_h$ into equation \ref{eqn:discrete_weak_form} in the canonical case where $\mathcal{J}_h = 1$, $(\mathcal{J}f)_h = f_h$, and $\gvec{\Pi} = \gvec{\sigma}$, yields
    \begin{align}
          \int_{V_n} f_h  
          \frac{\partial f_h}{\partial t}  d\mvec{z}
          &+
          \sum_m^{2d}  \int_{S_n^m} 
          f_h^- \{ z^m, H_h\}_h \hat{f} d\mvec{z}^\slashed{m}\bigg |^{z_n^m+\frac{dz_n^m}{2}}_{z_n^m-\frac{dz_n^m}{2}}
          -
          \int_{V_n} 
          \{ f_h,H_h \}_h f_h d\mvec{z} 
          = 
          0.
          \label{eqn:discrete_weak_form_w_eq_fh}
    \end{align}
    In the volume term, the bracket can be replaced using weak identities under the integral to break apart the projection of the bracket into a gradient of $f_h$ and $\alpha_h^i$, 
    \begin{align}
          \int_{V_n} 
          \{ f_h, H_h \}_h f_h d\mvec{z}
          = 
          \int_{V_n} 
          \left( \pfrac{f_h}{z^i} \right) \alpha_h^i f_h d\mvec{z}.
    \end{align}
   This regrouping is allowed in the canonical case by weak identities because the Poisson tensor, $\gvec{\Pi} = \gvec{\sigma}$, is a constant rather than an expansion in the basis. Then, the last term of equation \ref{eqn:discrete_weak_form_w_eq_fh} can be rewritten as
   \begin{align}
       \int_{V_n} 
       \left( \pfrac{f_h}{z^i} \right) \alpha_h^i f_h d\mvec{z}
       &=
       \int_{V_n} 
       \pfrac{}{z^i} \left(\frac{f_h^2}{2} \alpha_h^i \right)  d\mvec{z}
       -
       \int_{V_n} 
       \frac{f_h^2}{2}\pfrac{\alpha_h^i}{z^i}  d\mvec{z}. \label{eqn:div_alpha_step_of_L2_norm_proof}
   \end{align}
   
   Applying phase space incompressibility from Lemma \ref{lemma:ps_incompress} for the canonical case, the volume term is zero
   \begin{align}
       \pfrac{\alpha^i_h}{z^i} 
       = 
       \sigma^{ij} \frac{\partial^2 H_h}{\partial z^i \partial z^j}
       =
       0.
   \end{align}
   This means the last term of equation \ref{eqn:div_alpha_step_of_L2_norm_proof} is zero. Applying the divergence theorem to the first term of equation \ref{eqn:div_alpha_step_of_L2_norm_proof}, and combining it with the surface term of equation \ref{eqn:discrete_weak_form_w_eq_fh}, we have
    \begin{align}
        \frac{\partial}{\partial t} 
        \int_{V_n}
        \frac{1}{2} f_h^2 d\mvec{z}
        +
        \sum_m^{2d} \int_{S_n^m} 
        \alpha_h^m f_h^- 
        \left(
        \hat{f} - \frac{1}{2}f_h^- 
        \right) d\mvec{z}^\slashed{m}\bigg |^{z_n^m+\frac{dz_n^m}{2}}_{z_n^m-\frac{dz_n^m}{2}}
        = 
        0.
    \end{align}
    
    With the \textit{central flux}, $\hat{f} = (f_h^+ + f_h^-)/2$, we are left with, 
    \begin{align}
        \frac{\partial}{\partial t} \int_{V_n} \frac{1}{2} f_h^2 d\mvec{z}
        +
        \sum_m^{2d} \int_{S_n^m} \frac{1}{2}\alpha_h^m f_h^-f_h^+ d\mvec{z}^\slashed{m} \bigg |^{z_n^m+\frac{dz_n^m}{2}}_{z_n^m-\frac{dz_n^m}{2}}
        =
        0.
    \end{align}
    Summing over all phase-space cells, the $L_2$ norm is conserved exactly,
    \begin{align}
           \pfrac{}{t} \sum_{V_n 
           \in
           \mathcal{T}} \int_{V_n} \frac{1}{2} f_h^2 d\mvec{z}
           = 
           0.
    \end{align}
    
    With the \textit{upwind flux}, $\lambda = |\alpha^m_h|$, used in equation \ref{eqn:general_flux_function}, the second term of the integrand becomes,
    \begin{align}
        \alpha^m_h f_h^- 
        \left(
        \hat{f} - \frac{1}{2}f_h^- 
        \right)
        =
        \frac{1}{2} \alpha^m_h f_h^- f_h^+ 
        - 
        \frac{1}{4}| \alpha^m_h| 
        \left( 
        (f_h^+)^2 - (f_h^-)^2 
        \right) 
        +
        \frac{1}{4}|\alpha^m_h| ( f_h^+ - f_h^-)^2.
    \end{align}
    Summing over all phase-space cells eliminates the first and second terms,
    \begin{align}
        \pfrac{}{t} \sum_{V_n \in \mathcal{T}} 
        \int_{V_n} 
        \frac{1}{2} f_h^2 d\mvec{z} 
        =
        - \frac{1}{4}\sum_{V_n \in \mathcal{T}} 
        \sum_m^{2d} \int_{S_n^m} 
        |\alpha^m_h| ( f_h^+ - f_h^-)^2 d\mvec{z}^\slashed{m}\bigg |^{z_n^m+\frac{dz_n^m}{2}}_{z_n^m-\frac{dz_n^m}{2}}
    \end{align}
    Since the right hand side is less-than or equal to zero, the global $L_2$ norm decays monotonically.
\end{proof}

%% GJ (11/26/2025) SEE NOTES: The step where we differentiate both sides of the inequality and sum over phase space is not valid. 

% \begin{proposition}
%     In the canonical case, if the discrete distribution function $f_h$ remains positive definite, then the discrete scheme grows the discrete entropy monotonically,
%     \begin{align}
%         \frac{d}{dt}\sum_{V_n \in \mathcal{T}}\int_{V_n}-f_h \ln{(f_h)}d\mvec{z} \geq 0.
%     \end{align}
% \end{proposition}
% \begin{proof}
%     We have the bound $\ln(f_h) \leq f_h - 1$ as long as $f_h > 0$. Multiplying by $-f_h$ gives us the inequality,
%     \begin{align}
%         - f_h \ln{(f_h)} \geq -f_h^2 + f_h.
%     \end{align}
%     The left-hand side is the discrete entropy of the Vlasov-equation. Integration over the phase-space cell, summation over all cells, and taking the time derivative of this inequality yields,
%     \begin{align}
%         \frac{d}{dt}\sum_{V_n \in \mathcal{T}}  
%         \int_{V_n} 
%         -f_h \ln{(f_h)}d\mvec{z}
%         \geq
%         \frac{d}{dt} \sum_{V_n \in \mathcal{T}}  
%         \int_{V_n}  
%         \left( -f_h^2 + f_h \right) d\mvec{z}.
%     \end{align}
%     Where the right-hand side $f_h$ term vanishes due to particle conservation, and the $-f_h^2$ term grows monotonically following the proof of the decay of the $L_2$ norm.
% \end{proof}

\subsection{The General $L_2$ Norm and Some Useful Weak Identities}

The discrete $L_2$ stability proof depends on the weak identity
\begin{align}
    \int_{V_n} w_h f d\mvec{z} = \int_{V_n} w_h f_h d\mvec{z}\label{eqn:weak_id_1}
\end{align} 
for functions $w$ and $f_h$ in $\mathcal{V}_h^p$. Here, $f$ (with no subscript) is a general function that may not lie within $\mathcal{V}_h^p$. But equation \ref{eqn:weak_id_1} demonstrates that only the projections of $f$ within the basis contribute to the integral. Further, equation \ref{eqn:weak_id_1} is what allowed us to separate the volume term's $\alpha_h^i$ from the gradient of $f_h$ to proceed with the proof via,
\begin{align}
     \int_{V_n} w_h (f g)_h d\mvec{z} = \int_{V_n} w_h f_h g_h d\mvec{z}, \label{eqn:weak_id_2}
\end{align} 
which is a specific case of equation \ref{eqn:weak_id_1}, where $(fg)_h$ and $g_h$ are also functions in $\mathcal{V}_h^p$. Unfortunately, the $L_2$ norm for the general case cannot be proven to be bounded. The reason is that while $(\mathcal{J}f)_h \doteq \mathcal{J}_hf_h$ are weakly equal under integration via equation \ref{eqn:weak_id_2} only when these are the only two factors. Otherwise, we cannot separate them into their own projections without changing the value of the volume term integral. 

The second weak identity prevents arbitrary grouping
\begin{align}
    \int_{V_n} w_h (f g)_h h_h d\mvec{z} \neq \int_{V_n} w_h f_h g_h h_h d\mvec{z} 
\end{align} 
for functions $w$, $f_h$, $g_h$, and $h_h$ in $\mathcal{V}_h^p$. It is this identity that prevents the breakup of the $(\mathcal{J}f)_h$ into $\mathcal{J}_hf_h$ under the volume term integral, where in the non-canonical case you have an extra function for $\gvec{\Pi}_h$ and $\mathcal{J}_h$. Secondly, we do not have a proof of incompressible phase space for the general case, which means we cannot write this term as a divergence, as was possible for the canonical case. While non-canonical brackets are interesting for providing alternative ways of writing coordinates, the system will need some reformulation to achieve provable $L_2$ norm stability for this case, which is outside the scope of this paper. We mention that in the presence of even weak collisions, the $L_2$ norm will decay from the collsions, hence increasing the stability of the solver even in the non-canonical case. For more details on weak identities, see \ref{app:weak_ids}

\subsection{Discrete Time Advancement and Collisions}
% Talk about SSP-RK3 effects
As we have proven, the overall discrete scheme is conservative for total energy and total number of particles. However, the SSP-RK3 scheme by Shu and Osher \cite{shu1988efficient} that we use for discrete time advancement is not time reversible and thus in general (when the Hamiltonian changes in time) decays total energy on the order of the time-stepping scheme (demonstrated in Table 2. of \cite{hakim2019discontinuous}). 
However, in the cases considered here where the Hamiltonian is time-independent, energy is conserved exactly---see Section~\ref{sec:Benchmarks} and the energy conservation errors in Table~\ref{table:energy_cons_for_shocks}.

Further, SSP-RK3 provides intrinsic stability advantages over lower order explicit integrators as SSP-RK3 can safely model standing modes without the need for additional dissipation due to its absolute-stability region containing the origin including a finite range of imaginary frequencies in the complex plane.\cite{durran2010numerical, tadmor2025stability} This restriction does not exist for the collision terms however, and that operator may be treated separately.

One such option for the operators on the right hand side of the kinetic equation, such as BGK collision term \ref{eq:BGK_op}, 
\begin{align}
    \frac{\partial f}{\partial t} + \{ f, H\} =  -\nu(f - f_M),
\end{align}
is to also update this term explicitly with the same scheme. However, in the limit that $\nu \rightarrow \infty$, this term puts extreme restrictions on the time step, which tend towards $\Delta t \rightarrow 0$. To avoid the collisional time step restriction, we split the collision and advective terms. Evolving the advective term with an SSP-RK3
\begin{align}
  \pfrac{f}{t} + \{f,H\} = 0.
\end{align}
and then applying the BGK operator subsequently,
\begin{align}
  \pfrac{f}{t} = -\nu(f - f_M).\label{eqn:split_bgk}
\end{align}

SSP-RK3 advance in the advective equation takes $f^n$ initially and returns $f^*$. Then, to advance $f^*$ to $f^{n+1}$, we use a separate time stepper for collisions and construct $f_{M}$ with identical discrete moments to $f^*$. Replacing the time partial derivative in equation \ref{eqn:split_bgk} with a first-order implicit Euler step, equation \ref{eqn:split_bgk} then becomes
\begin{align}
    f^{n+1} = \frac{f^{*} + \nu \Delta t f_M}{1 + \nu \Delta t} \label{eqn:implicit_bgk}
\end{align}
which has no time step restriction. This is unlike the explicit time step restriction for validity, $\nu \Delta t \ll 1$. In the low-collisionality limit where $\nu \Delta t \ll 1$, equation \ref{eqn:implicit_bgk} matches the explicit BGK operator up to first order as $f^{n+1} = f^* - \nu \Delta t(f^* - f_M)$. In the limit that the collision frequency is much larger than the timestep size, $\nu \Delta t \gg 1$, the implicit form, equation \ref{eqn:implicit_bgk}, has the correct asymptotic behavior, $f^{n+1} = f_M$, which is the fluid limit. Therefore, the collision operator is accurate for all choices of $\nu \Delta t$.\footnote{The time integration described here is a simple form of an implicit-explicit (IMEX) scheme; see, for example, \cite{pieraccini2007implicit}. High order SSP IMEX schemes have been developed (see \cite{pareschi2005implicit}) and are planned as future additions to the Gkeyll code. }

\begin{remark} The overall scheme is \textit{Asymptotic Preserving}; i.e., in the fluid limit, the scheme reproduces the ideal fluids (Euler) fluxes. Section \ref{sec:Benchmarks} demonstrates this property through agreement between exact Riemann solutions and the moments of the kinetic scheme. 
\end{remark}

% Adapted from the MJ Paper
\subsection{Iterative Corrections for Collisions}\label{subsec:bgk_iterative_corr}

For use in the BGK collision operator we need an equivalent Maxwellian, $f_M$. When this Maxwellian is constructed discretely on a finite grid it's moments will not, in general, match the moments used to construct the Maxwellian itself. If not accounted for this error in constructing the discrete Maxwellian can cause significant errors in the conservation properties of the full scheme. To fix this, we employ an iterative algorithm to correct the discretely projected $f_M$ to match the desired moments. 

The algorithm begins with the desired moments, $\sqrt{g}n_0$, $\mvec{u}_0$, and $T_0$ which are used to create a discrete projection $\tilde{f}_M$, where $\tilde{f}_M$ represents the uncorrected equilibrium distribution. These moments either come from the initial conditions or from the moments of $f$. Because of the discrete, finite velocity space grid and the finite polynomial order representation of the solution, the projected moments of the discrete distribution do not necessarily match the desired values. This error is ultimately what this iterative algorithm is designed to correct.

We can calculate the moments via Section \ref{sec:Equilibrium Distribution Function} of the discrete distribution to obtain the corresponding moments $\sqrt{g}\tilde{n}$, $\tilde{ \mvec{u}}$, and $\tilde{T}$. We then use Picard iteration to minimize the difference between the vector of desired moments $\mvec{M}_0 = (\sqrt{g}n_0, \mvec{u}_{0}, T_0)$ and the vector of moments from the discrete projection $\mvec{M}_k = (\sqrt{g} \tilde{n}, \tilde{\mvec{u}}, \tilde{T})$. For an iteration index $k$, the initial step, $k = 1$ and with $d\mvec{M}_{1} = 0$, the iterative scheme to correct the moments is

\begin{align} \label{iterative_correction}
\begin{split}
    & dd\mvec{M}_k = \mvec{M}_0 - \mvec{M}_k  \\
    & d\mvec{M}_{k+1} = d\mvec{M}_k + dd\mvec{M}_k  \\
    & \mvec{M}_{k+1} = \mvec{M}_0 + d\mvec{M}_{k+1}.
\end{split}
\end{align}
Once we take a step from $k$ to $k+1$, the moments $\mvec{M}_{k+1}$ are then used to re-project the distribution function $\tilde{f}_M$; $k$ is incremented, and $\mvec{M}_k$ is set equal to the new moments of $\tilde{f}_M$. This repeats until the moments converge to a desired error. Once the correction algorithm is converged, the Maxwellian distribution is ready to be used inside the BGK collision operator, ensuring local conservation.

It is important to note that the velocity bounds need to be sufficient for this iteration scheme to converge. For instance, the bounds need to be wide enough to resolve the temperature and beam velocity. For most adequately resolved cases, by which we mean the moments are representable, this algorithm converges to machine precision in roughly 3 to 15 iterations. Distributions with moments near the edge of the resolvability range tend to slow down the iterator considerably.\footnote{See \cite{johnson2025moment} for a special relativistic analysis that applies to the non-relativistic case here.} Once complete, the Picard iteration returns a discretely projected $f_M$ that is moment matched to the desired values. Having this algorithm is essential for maintaining both conservation laws and the asymptotic preservation property of the discrete system. In particular, this correction algorithm is what permits the use of the implicit scheme defined in equation \ref{eqn:implicit_bgk}; also see \cite{liu2025asdex} which uses this algorithm for multi-species collisions.\footnote{ Alternative discrete equilibrium correction algorithms have been developed, such as \cite{mieussens2000discrete} and \cite{dzanic2023positivity}, which use Newton corrections that typically take only a few iterations to reach machine precision accuracy. Such an approach could conceivably be applied here. However, specifically for the modal DG represented quantities, the Jacobian inversion required would be of the size of the number of basis functions in configuration space times the number of moments, squared, which makes this approach less attractive. The trade off is that the corrective method provided here will take additional iterations but is simpler to implement and can automatically handle arbitrary equilibrium functions, such as the  Maxwellian, Maxwell-Jüttner, and equilibrium in curved spaces.}

\subsection{Discrete Energy Conservation}
In addition to the iterative scheme for collisions, the moments themselves must be computed consistently. To compute the equilibrium, $f_M$ in equation \ref{eq:lab_frame_eq}, the density and momentum density moments are given by equations \ref{eq:Jn} and \ref{eq:Jnu_i} discretely. However, the temperature moment, equation \ref{eq:v_th^2}, must be expanded carefully in terms of quantities that belong to the discrete basis. A consistent choice where all terms reside within the reduced Serendipity basis is,
\begin{align}
    dnv_{th}^2\sqrt{g} &= 2E - \sqrt{g}ng^{ij}u_iu_j \label{eqn:proper_comp_of_vth^2}\\
    E &\equiv \int H_hfd\mvec{p}.
\end{align}
An alternative choice, which is seemingly equivalent, is to replace $2E$ with $h^{ij}M_{2ij}$ where $M_{2ij} \equiv \int p_i p_j f d\mvec{p}$. However, this choice is not energy conserving. This is due to the computation of $M_{2ij}$, which has terms ~$x^2p^2$ that do not belong to the Serendipity basis set. It is therefore important for the energy conservation of the collision algorithm that we use equation \ref{eqn:proper_comp_of_vth^2} to compute temperature, as $H_h$ does belong to the basis set.

\subsection{Coordinate Singularities}
Coordinate singularities may exist for geometries such as polar or spherical coordinates as the radius tends to zero. For the purposes of this paper, these regions are excluded from the simulation domain. In the relativistic context, black hole representations have both physical and coordinate singularities that may arise at horizons in certain choices of coordinates, like Boyer-Lindquist. These coordinate singularities, however, are removable with horizon-penetrating coordinates such as Kerr-Schild. This leaves only the ring-singularity at $r = 0$, which may also be excluded from the domain for most simulations which are interested in dynamics outside of the horizon. 

As introduced here, the solver would require additional boundary conditions to handle coordinate singularities like $r=0$ in cylindrical and spherical coordinates. The CFL condition may also become dominated by these regions. This could be potentially remedied by grid-coarsening near the coordinate singularities. An example of the treatment of a singularity in a DG scheme is the x-point in 2D gyrokinetic simulations, see \cite{shukla2025constructing}, which use nodal calculations at surface quadrature points to avoid directly evaluating the geometry at the singularity. For all the simulations considered in this manuscript, coordinates singularities are excised from the domain, similar to the excision of singularities in general relativistic simulations of compact objects.
% -*- latex -*-

\section{Benchmarks and Example Problems}\label{sec:Benchmarks}

Reference quantities, such as ones for density, momenta, temperature, length scale, angular speed, and time are set to unity. Therefore, these benchmarks are dimensionless and only intended to test the implementation of the DG scheme. In addition, all tests in this section were run with the $\texttt{Gkeyll}$ code using polynomial order $p = 2$ Serendipity basis. The scripts are publicly available at: \url{https://github.com/ammarhakim/gkyl-paper-inp}

%% Additions by GJ:
\subsection{Sodshocks in Flat and Curved Geometries}
As a first test of the solver, consider a sodshock on a flat-space geometry with one configuration-space dimension and one momentum-space dimension (1x1v). Sodshock test cases provide an integrated test for initializing the moments, evolving the distribution with the canonical Poisson bracket updater, and testing the conservative collision operator. The shock is set up by initializing Maxwellians with a left state, $S_L$, for density and temperature moments, $S_L = [1.0, 1.0]$ for $x \leq 0.5$, and a right state $S_R = [0.125, \sqrt{1.0/1.25}]$ for $x > 0.5$, with no initial flows. The moments for the states are defined for density, covariant momentum components, and temperature moments as
\begin{align}
    n  &= \frac{1}{\sqrt{g}}\int_{-\infty}^{\infty} f d\mvec{p}\\
     u_i &= \frac{g_{ij}}{n \sqrt{g}}\int_{-\infty}^{\infty} \pfrac{H_h}{p_j} f d\mvec{p} \\
    T &=v_{th}^2 = \frac{1}{d n \sqrt{g}} \bigg( 2 \int_{-\infty}^{\infty} H_h f d\mvec{p} - n \sqrt{g} g^{ij}u_i u_j \bigg )
\end{align}
where the temperature moment is defined as the isolated $v_{th}^2 \equiv T$. These moments are plugged into $f_M$ equilibrium, equation \ref{eq:lab_frame_eq}, which is then run through the moment correction algorithm to project the initial distribution on the grid. 

Several simulations are taken with varied collision frequency between effectively collisionless, $\nu = 15$, and the fluid limit with a frequency, $\nu = 15000$. Reflecting-wall boundary conditions are located at $x = 0$ and $x = 1$, and the momentum bounds span from $p_{max} = \pm 8$ to capture the entire distribution function. A grid resolution of $N_x = 128$ and $NV_x$ = $32$ is used for these simulations, which run from time $t = 0$ until a final time of $t_f = 0.1$. A polynomial order of $p = 2$ is chosen with Serendipity basis. In this case, the Hamiltonian takes the simple form: $H = \frac{1}{2}p_x^2$.

In Figure \ref{fig:can_pb_sodshock}, the integrated moments of the kinetic sodshock simulations are compared to an exact Riemann solution for a range of collision frequencies. Beginning with weak collisions, $\nu = 15$, the free streaming particles advect away the initial jump in the moments, smoothing out the discontinuity. However, as the collisionality increases to $\nu = 15000$, the kinetic simulation enters the fluid limit, and features such as the rarefaction wave, contact discontinuity, and forward-propagating shock appear. The convergence in the highly collisional limit to the exact Riemann solution of the kinetic solver verifies that the scheme is asymptotic preserving. In other words, in the limit of high collisionality, the solver reproduces the Euler-fluxes and hence behaves like a fluid. This behavior is evident from the agreement between the exact solution, the black dashed line, and the yellow line, which represents the integrated moments of the distribution function for the $\nu = 15000$ case in Figure \ref{fig:can_pb_sodshock}.

\begin{figure}[H]
\centering
\includegraphics[width = 1.0\linewidth]{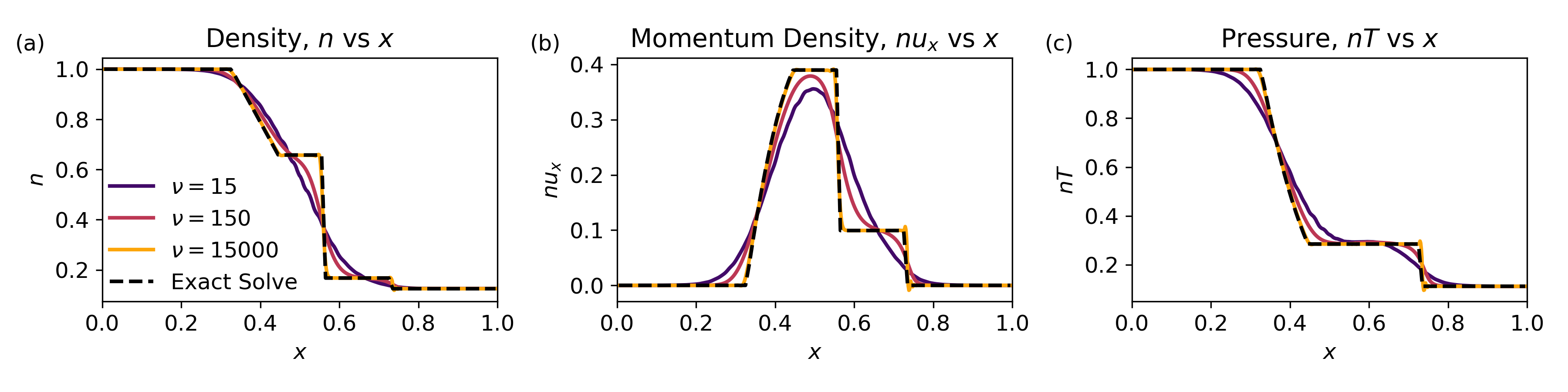}
\caption{Sodshock (1x1v) moment comparison taken at the final timestep between the moments of the canonical bracket solver at various collision frequencies, $\nu$, against the Riemann solution. The collision frequency spans from a nearly collisionless (free-streaming) limit to a highly collisional (fluid-like) limit. }\label{fig:can_pb_sodshock}
\end{figure}

The Rankine-Hugoniot conditions, while themselves not directly present in the kinetic solver, are recovered in the highly collisional limit. The moment conserving operation for projecting the distribution and the implicit solve for the collisions in this case rapidly relaxes $f$ towards the moment-matched $f_M$ locally in configuration space. In this asymptotic limit, the kinetic solver matches the fluid equations, whose weak forms obey the Rankine-Hugoniot conditions. Therefore, the jump conditions are recovered through the limiting fluid system rather than through direct enforcement inside the kinetic update.

Extending the sodshock problem to the non-rectangular geometry of an annular disk, consider a radially expanding shock with the same initial state as the 1D flat geometry case. The Hamiltonian for free streaming motion on an annulus is
\begin{equation}
    H = \frac{1}{2}\left(p_r^2 + \frac{p_\theta^2}{r^2} \right),
\end{equation}
where the metric for an annulus is given by,
\begin{equation}
    ds_{annulus}^2 = dr^2 + r^2 d\theta^2.
\end{equation} 
We use reflecting-wall boundary conditions at $r_{min} = 0.5$ and $r_{max} = 1.5$, a periodic grid azimuthally at $\theta_{min} = 0$ and $\theta_{max} = 2\pi$, and a grid size of $N_r = 128$, $N_\theta = 1$, and $NV_r = NV_\theta = 12$. A polynomial order of $p = 2$ is chosen with Serendipity basis. The left state transitions to the right states at $r_{jump} = 1.0$. We note that we have shifted the grid away from the origin because of the coordinate singularity at $r=0$.

Figure \ref{fig:can_pb_sodshock_annular_disk} presents a comparison of the moments at the final time, $t_f = 0.1$, between the integrated moments of the kinetic distribution and a high-resolution Euler solution using the same initial conditions. Like the 1x1v Sodshock case in figure \ref{fig:can_pb_sodshock}, figure \ref{fig:can_pb_sodshock_annular_disk} line-outs (panels d-f) demonstrate quantitative agreement between the canonical Poisson solver and the Euler solver. The mean relative-error over the domain in the density and temperature moments are: $\epsilon(n) = 3.967\times10^{-3}$ and $\epsilon(T) = 5.480 \times 10^{-3}$. The agreement between the Euler solution and kinetic solve for the annulus verifies that the asymptotic preserving property carries over to non-rectangular geometries.

To examine the convergence between the Euler and kinetic moments more closely, figure \ref{fig:annular_disk_convergence} zooms in on the forward-propagating shock at $r = 1.2$. As we increase spatial resolution, the moments of the distribution function converge to the Euler solution without increasing the magnitude of overshoots around the shock. We emphasize that these kinetic simulations are run without the limiters typically employed by discretizations of fluid equations; the $L_2$ stability of the canonical bracket scheme leads to this robust performance and stable convergence without any growth of spurious oscillations around the shock.

\begin{figure}[H]
\centering
\includegraphics[width = 1.0\linewidth]{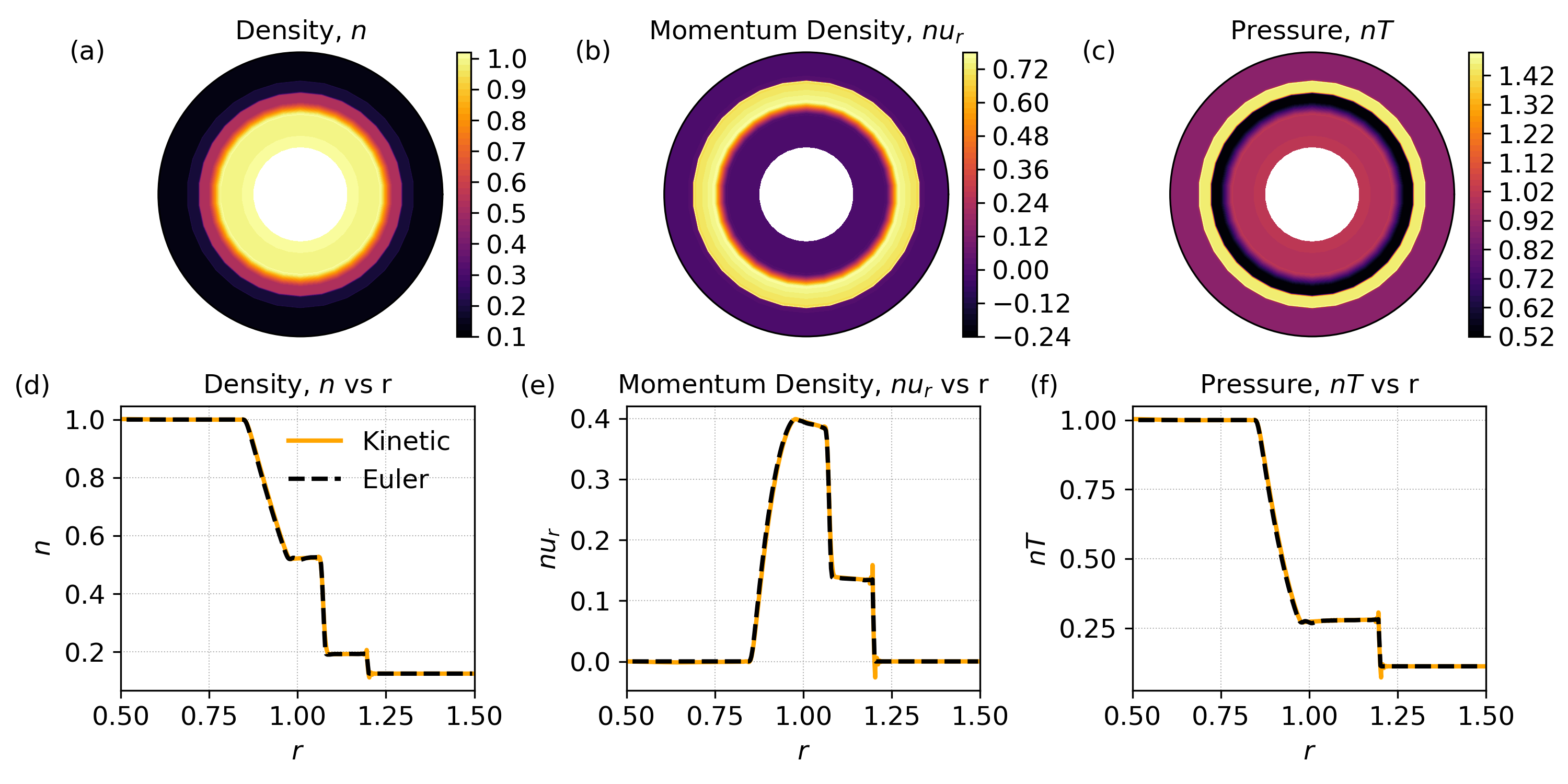}
\caption{Annular disk sodshock moments at the final simulation time using the canonical Poisson bracket formalism. From left to right, the columns show density, radial momentum density, and pressure. The top row shows the physical coordinates of the moments of the distribution, and the bottom row of plots show line-outs of the moments compared to the Euler solution.}  \label{fig:can_pb_sodshock_annular_disk}
\end{figure}

\begin{figure}[H]
\centering
\includegraphics[width = 1.0\linewidth]{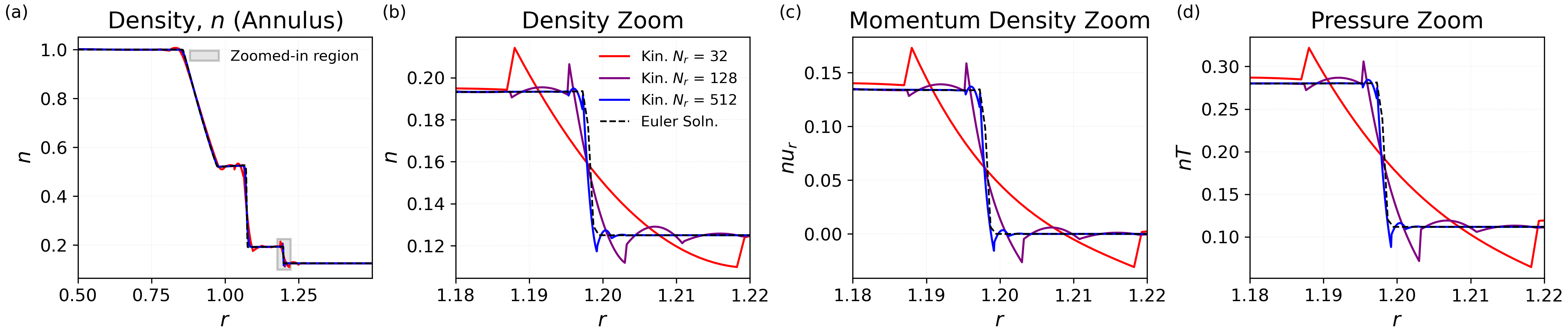}
\caption{ Zooming in onto the forward-propagating shock in the annulus sodshock configuration from figure \ref{fig:can_pb_sodshock_annular_disk} to examine convergence with increasing spatial resolution. Panel (a) has a gray highlighted region that is zoomed into in panels (b-d) for the density, velocity, and temperature moments respectively. }  \label{fig:annular_disk_convergence}
\end{figure}

In Table \ref{table:energy_cons_for_shocks}, we report the absolute error in the total (integrated) density and total energy for both sodshock cases. In each case, both total number of particles and total energy are verified to be conserved for the discrete scheme, demonstrating the proper implementation of our conservative scheme and the empirical verification of our proofs in Section~\ref{sec:scheme}.

\begin{table}[h!]
\centering 
\begin{tabular}{|c|c|c|c|c|}
\hline
& \multicolumn{2}{|c|}{1x1v Sodshock} & \multicolumn{2}{|c|}{Annular Sodshock} \\
\hline
& P1 & P2 & P1 & P2 \\
\hline
$\Delta N/N_0$ & $7.30 \times 10^{-15}$ & $1.91 \times 10^{-14}$ & $1.89 \times 10^{-14}$ & $3.42 \times 10^{-14}$ \\
\hline
$\Delta E/E_0$ & $2.60 \times 10^{-15}$ & $1.76 \times 10^{-14}$ & $2.54 \times 10^{-15}$ & $3.23 \times 10^{-14}$ \\
\hline 
\end{tabular}
\caption{The difference in total number of particles $N$, and total energy $E$ for polyorder 1 (P1) and 2 (P2) representations divided by their respective initial totals. The 1x1v sodshock data is from the specific $\nu = 15000$ case, but for any collision frequency both quantities are conserved machine accuracy. }\label{table:energy_cons_for_shocks}
\end{table}

Further, the annular disk case in Figure \ref{fig:can_pb_sodshock_annular_disk} conserves the angular momentum component $p_\theta$ to machine precision from the initial to final time of the simulation. This conservation of the azimuthal canonical momentum is only possible due to exact integration of the surface and volume terms---we have entirely eliminated aliasing errors from the solution \cite{hakim2020alias, juno2018discontinuous}. Without exact integration, the aliasing errors would pollute the invariant. Checking conserved quantities such as the preservation of $p_\theta$ in this axisymmetric case serves as a delicate and robust benchmark for verifying that the numerical scheme remains alias-free.

\subsection{Kelvin–Helmholtz Instability on Embedded Surfaces}
Consider next a 2-dimensional configuration-space and 2-dimensional momentum-space (2x2v) setup: the surface of a sphere. The aim is to reproduce the Kelvin-Helmholtz instability with the kinetic solver in the highly collisional limit, $\nu = 15000$. This problem is set up initially with counter-propagating fluid elements that are unstable to the Kelvin-Helmholtz instability (KHI). Random modes are initialized in both momentum moments to onset the instability with a perturbation amplitude of $1 \times 10^{-2}$ and modes spanning from the system size down the the grid scales. In this configuration, the initial distribution function moments with density, $\theta$-momentum, $\phi$-momentum, and temperature are given by the state $S_1 = [2.0, 0.0, -0.5, 1.25]$, which is initialized around the equator for all $\phi$, and between $\theta = (\pi/4, 3\pi/4)$. The second element, propagating in the opposite direction, has the state $S_2 = [1.0, 0.0, 0.5, 2.5]$ on the remaining parts of the domain, up to a reflective boundary at $\theta = \pi/8$ and $7\pi/8$. The simulation resolution is $N_\theta = N_\phi = 192$ and $NV_\theta = NV_\phi = 12$. A polynomial order of $p = 2$ is chosen with Serendipity basis. The flows come from Hamiltonian constructed using the metric for the surface of a sphere, 
\begin{equation}
    ds_{sphere}^2 = r^2 d\theta^2 + r^2 \sin^2\theta d\phi^2.
\end{equation}

The density snapshot on the surface of the sphere in Figure \ref{fig:khi_sphere} is taken at time $t=3.0$, after the KHI vortices have begun to form. Representative points on the distribution have been selected with a red ``$x$'' and the corresponding local $df = f - f_{Eq}$ are plotted to the right, where $f_{Eq}$ is the moment-matched local thermodynamic equilibrium. From this quantity, moment-closures for reduced models can be numerically explored, and transport coefficients for momentum and energy can be computed. The quantity $df$ is thus a crucial piece of information provided by a kinetic approach with wide utility for understanding the transport of these physical systems. 

We emphasize that a benefit of the DG kinetic continuum representation is the accuracy of this calculation of $df$, which is here on the order of $10^{-5}$ relative to the peak magnitude of the distribution. Indeed, both $f$ and $f_{Eq}$ are $\mathcal{O}(1)$ quantities around their peak. The power of a continuum representation of the distribution function compared to Monte Carlo approaches is thus apparent due to the sensitivity of the analysis we can perform with our approach. 

\begin{figure}[H]
\centering
\includegraphics[width = 1.0\linewidth]{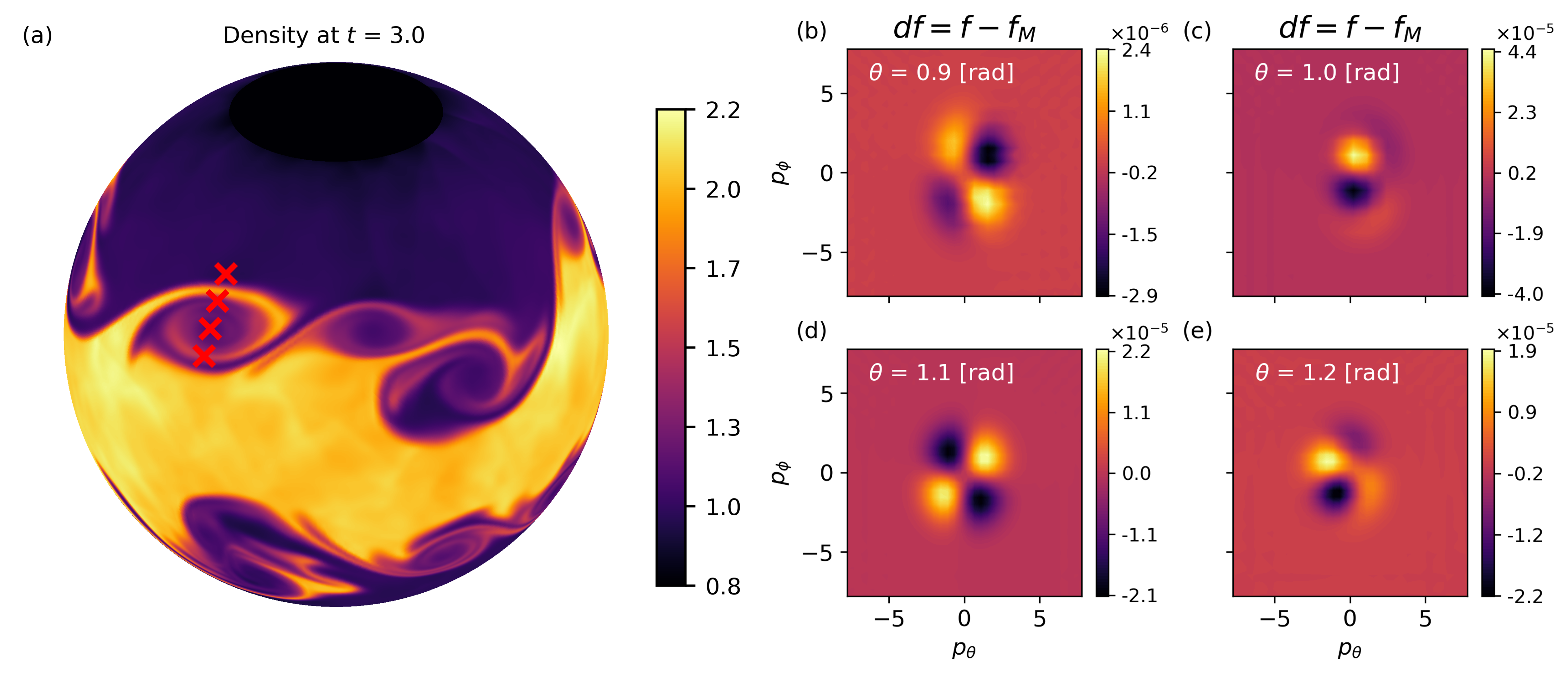}
\caption{Kelvin-Helmholtz instability of the surface of a sphere using the canonical Poisson bracket method to update the distribution function, $f$. Panel (a) is the density moment of the distribution in physical space. Panels (b-e) show the difference between the distribution and local moment-matched Maxwellian, called $df$, at select points indicated by the red-x's on panel (a).}  \label{fig:khi_sphere}
\end{figure}

We next illustrate the Kelvin-Helmholtz instability on a hyperboloid surface. We have an angular coordinate, $\theta$, which spans from 0 to $2\pi$. The second coordinate $z$ spans from $(-1, 1)$. The initial distribution function's moments in the hyperbolic configuration with density, $\theta$-momentum, $z$-momentum, and temperature are given by the state $S_1 = [2.0, 0.0, -0.5, 1.25]$, which is initialized around the equator for all $\theta$ with between $|z| \leq 0.5$. The second element, propagating in the opposite direction, has the state $S_2 = [1.0, 0.0, 0.5, 2.5]$ on the remaining parts of the domain up to a reflective boundary at $z = \pm 1$. The number of cells are $N_\theta = N_z = 192$ and $NV_\theta = NV_z = 12$. A polynomial order of $p = 2$ is chosen with Serendipity basis. The metric for the hyperbolic geometry case is given by,
\begin{equation}
    ds_{hyperbolic}^2 = (r^2 +z^2) d\theta^2 + \frac{r^2 + 2z^2}{r^2 + z^2} dz^2.
\end{equation}
The factor $r$ is fixed at a $r = 0.5$, which dictates the thinnest radial point such that a cross-section at $z = 0$ is a circle with this radius.

Like the surface of a sphere, we also display here in figure \ref{fig:khi_hyperbolic_surf} the physical geometry for density and the computation grid moments at time $ t= 4.5$. As with the surface of a sphere geometry, the density displays the familiar cat's eye structure of the nonlinear state of the KHI, albeit with altered shear due to the difference in geometry. The purpose of this figure is, thus, to illustrate that the algorithm can handle various specifications for the geometry by simply changing the metric used in the Hamiltonian. The specific implementation of this solver in the \texttt{Gkeyll} code can handle any coordinate configurations from (1x1v) to (3x3v).

\begin{figure}[H]
\centering
\includegraphics[width = 0.95\linewidth]{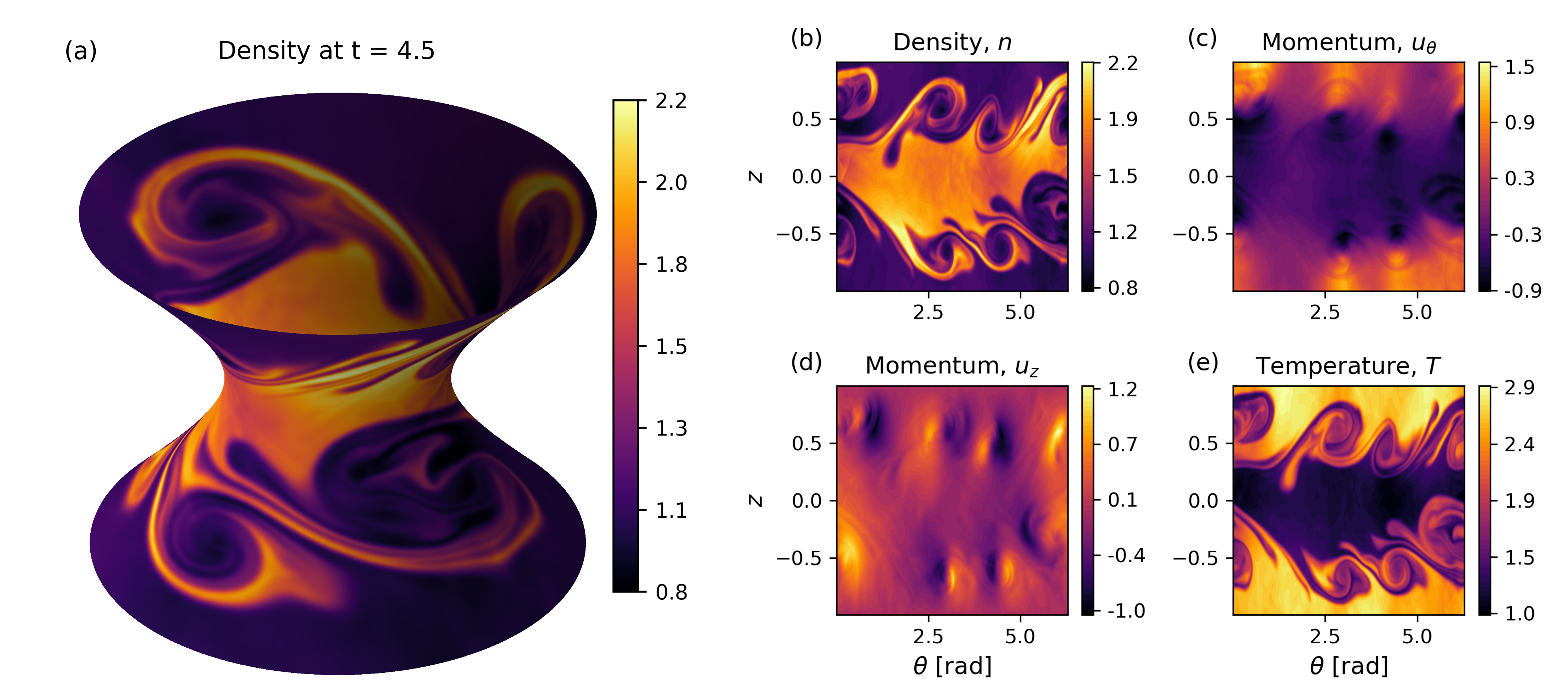}
\caption{Kelvin-Helmholtz instability on a hyperbolic surface. To the left, the density moment is mapped onto the physical surface. The panels to the right show the moments on the surface of density, $\theta$ and $z$ momenta, and temperature. The moments in all panels are snapshots taken at time, $t = 4.5$.}  \label{fig:khi_hyperbolic_surf}
\end{figure}

\subsection{Verification of $L_2$ Monotonic Decay}
As a check of the properties of the discrete scheme, we plot the $L_2$ norm for the Kelvin-Helmholtz instability in spherical and hyperbolic geometries. We compute the integral for the $L_2$ norm over the entire domain of $f$ at each timestep. Plotting over the entire simulation duration, the result shown in Figure \ref{fig:L2_norm}, verifies the monotonic $L_2$ decay property for both cases.

\begin{figure}[H]
\centering
\includegraphics[width = 0.45\linewidth]{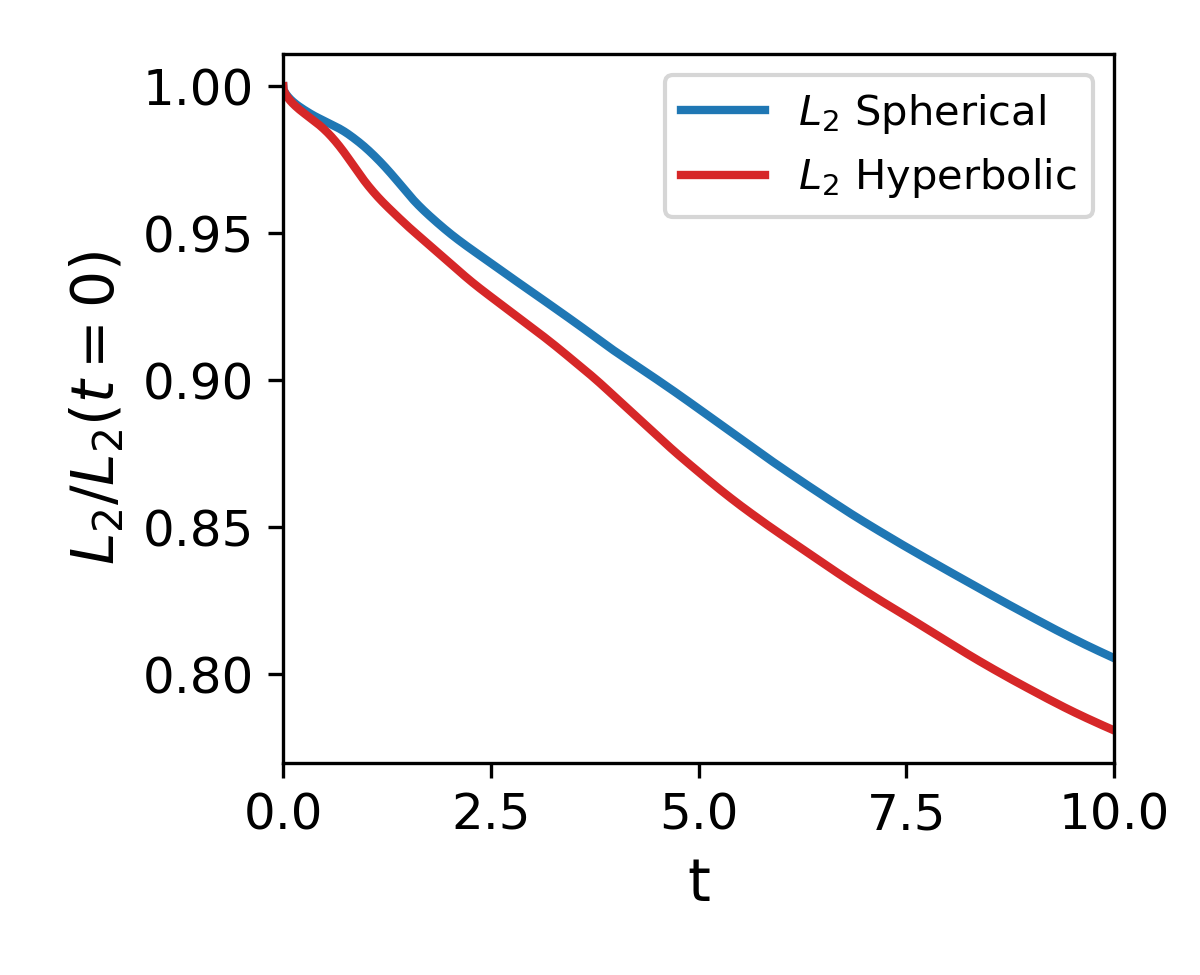}
\caption{$L_2$-norm for both the spherical and hyperbolic KHI normalized to the initial value of the $L_2$ norm.}  \label{fig:L2_norm}
\end{figure}

\subsection{Motion on a Uniformly Rotating Sphere}

As a final test, consider a uniformly rotating sphere. Here, we can compare the deflected trajectory of an initial Gaussian bump with an angular velocity directed towards the pole and an initial azimuthal angular speed both equal to that of the uniformly rotating sphere's angular speed, $\Omega = 1.0$. There are no collisions as we aim to compare this result to the single particle trajectory. 

The initial distribution for $f(t=0)$ is given by
\begin{align}
    f_0 
    = 
    \frac{n}{2\pi T}
    \exp \left( - \frac{(\theta-\theta_0)^2 + (\phi- \phi_0)^2 \sin^2 \theta}{2\sigma_{bump}}  \right)
    \exp \left( - \frac{1}{2T} \left((p_\theta - p_{\theta,0})^2 + \frac{(p_\phi - p_{\phi,0})^2}{ \sin^2 \theta}  \right) \right)
\end{align}
where a low temperature of $T = 1/6400$ minimizes the thermal spread relative to the width of the bump, $\sigma_{bump} = 0.1$. The peak is localized around $[\theta_0, \phi_0, p_{\theta,0}, p_{\phi,0}] = [\pi/2, \pi, 1, 1]$. The domain only has the upper hemisphere between $\theta = (\pi/2, 7 \pi/8)$. The number of cells in each direction of the phase space is given by $N_\theta$ = 24, $N_\phi = 172$, $NV_\theta = 192$, and $NV_\phi = 24$ with bounds on the momentum of $p_\theta = (-1.0625, 1.0625)$ and $p_\phi = (0.90625, 1.0625)$. These choices for bounds are specifically chosen to resolve the thermal width around the conserved variable $p_\phi$ and the range of momenta attained by the particle in $p_\theta$ through the particles' arcs. Finally, a polynomial order of $p = 2$ is chosen with Serendipity basis.

A single particle trajectory for these exact initial conditions is compared in both panels of Fig.\thinspace\ref{fig:coriolis} to indicate the trajectory of the bump. It can be seen from both panels that the collection of particles agrees nicely without sharp oscillations from the initial discontinuity at the reflective bound at $\theta = \theta_0$. Ultimately, the stars that indicate the exact particle position are expected to trail the bump, as they do for all frames.

Further, the reorientation of the shape to first compress in the $\theta$-direction and then invert is the expected behavior for the extended shape. Individually, the test particle undergoes initial deflection due to the Coriolis force at frame 1. By frame 2, the effective potential begins altering the trajectory by reflecting the particle back towards the equator. At frames 3 and 4, the particle has been reflected by the effective potential, and the Coriolis force begins to act in the other direction, steepening the trajectory. This demonstrates agreement between geodesic motion and the collisionless canonical bracket solver.

\begin{figure}[H]
\centering
\includegraphics[width = 1.0\linewidth]{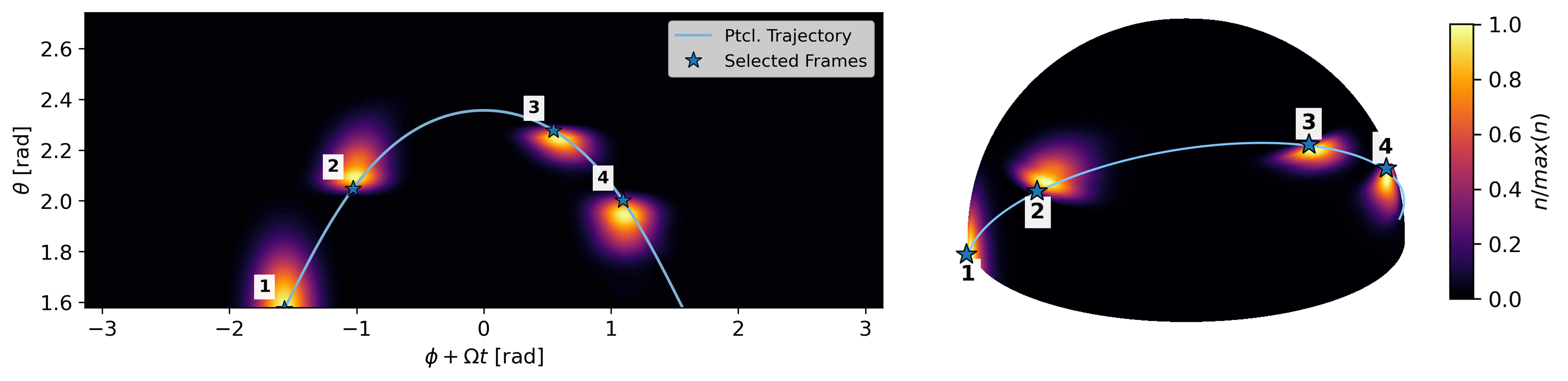}
\caption{Density plots at representative times of a blob moving on a uniformly rotating sphere. Both panels show the same data where the left is in computational coordinates and the right is the projection onto physical coordinates. Each label $(1...4)$, represents the location of the blob at times $t = (0.025, 0.500, 1.400, 1.775)$. At each point in time, the density is normalized to $\max(n)$. }  \label{fig:coriolis}
\end{figure}
% -*- latex -*-
\section{Normalized Momentum Coordinates}\label{sec:norm_momenta}

An intrinsic property of the canonical bracket formulation is that momentum space is expressed in terms of dual vectors that compress or stretch with the geometry of the system. As long as there is sufficient resolution to resolve the distribution function at all points in momentum space, this representation does not affect the solution. While all finite grids have minimum and maximum velocities and temperatures of the distribution that they are able to resolve, the momentum's dependence on geometry changes the resolvable moment ranges from point to point in the spatial domain. It should be noted that the discrete conservation laws proven earlier still hold even when $f$ enters under-resolved regions, but in these cases, the distribution can be distorted and lead to discrepancies with the exact solution.

Here we demonstrate a case with an annular disk where resolution in momentum space limits the resolvable distribution near small radii. To address this, we have developed a non-canonical bracket that uses normalized coordinates to represent the momentum vectors, avoiding the geometric dependence of momentum space resolution. 

Initializing uniform density and temperature with no flows in an annular disk geometry between a radius of $r = 0.5$ and $1.5$ in Figure \ref{fig:shrinking_vel_space_diagram} illustrates the momentum space dependence on geometry. Panels (a) and (c) specifically show how the uniform temperature and density for a Maxwellian on a static grid can take two separate ranges in $p_\theta$ depending on their radial locations of $r = 0.5$ and $1.5$, respectively. This radial dependence arises because the annular disk dual vectors in the $\theta$ direction have a factor of $1/r$,
\begin{align}\label{eqn:tangent_vectors_annulus}
    \dbasis{r} 
    &=
    \cos \theta\cbas{1}
    +
    \sin \theta\cbas{2}\\
    \dbasis{\theta} 
    &=
    -\frac{1}{r}\sin \theta\cbas{1}
    +
    \frac{1}{r}\cos \theta\cbas{2}.
\end{align}
In canonical coordinates, we describe the momentum vector as $\mvec{p} = p_i\dbasis{i}$. Therefore, the momentum $p_\theta$ in the $\theta$ coordinate will have a different distribution width for the same temperature, depending on the value of $r$. To avoid the challenge of finding static momentum bounds that can resolve the required temperature range, we can normalize the momentum coordinates. This, however, leads to a non-canonical Poisson bracket, which we derive next.

% Annular disk figure (Uniform T and N, divide out J)
\begin{figure}[H]
\centering
\includegraphics[width = 1.0\linewidth]{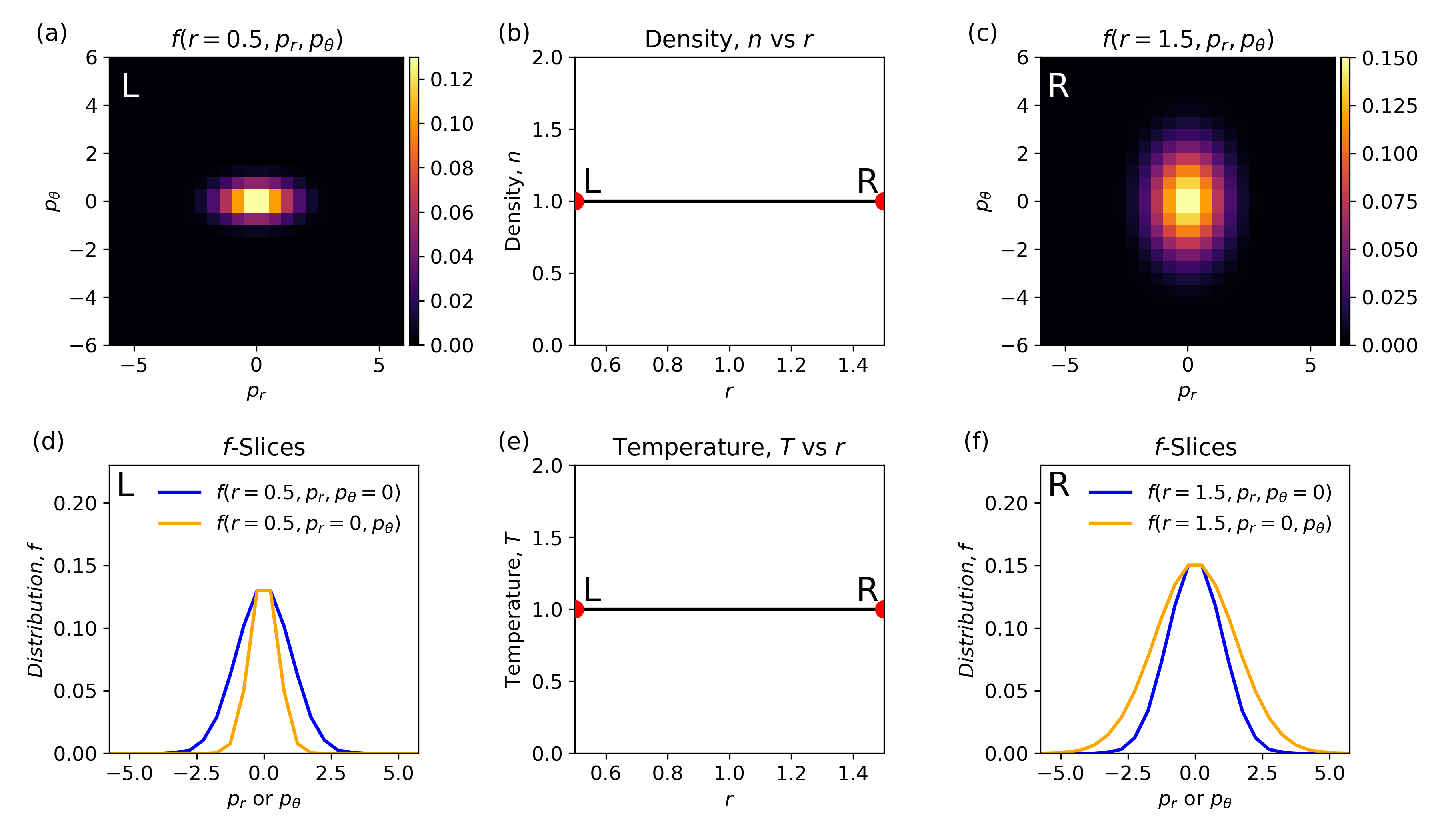}
\caption{Illustration of the velocity compression effect. The moments of density, panel (b) and temperature, panel (e), which are uniform from a radius of 0.5 to 1.5. There are red marks on the middle two panels to indicate the spatial locations of panel (a) and (d) at $r = 0.5$ to the left and panels (c) and (f) to the right at $r = 1.5$. Panels (a) and (c) plot the distribution function in momentum space at the left and right edges of the domain respectively. Below in panels (d) and (f) we take slices of the distribution function in each direction ($p_r$, $p_\theta$).}\label{fig:shrinking_vel_space_diagram}
\end{figure}

\subsection{Normalized Momenta Coordinates}\label{subsec:norm_coordinates}

Positionally dependent momentum space resolution can be completely obviated for diagonal metrics by introducing normalized momentum coordinates. Changing to normalized momentum coordinates leads to a change in the bracket from canonical to non-canonical form. Here we present a derivation of this non-canonical bracket with normalized momentum coordinates, denoted by hats, $\hat{p}_i$. We then demonstrate that this bracket also produces the equations of geodesic motion.

Beginning with the canonical form of the phase space Lagrangian and Hamiltonian
\begin{align}
    \mathcal{L} = \mvec{p} \cdot \dot{\mvec{x}} - H
\end{align}
\begin{align}
    H = \frac{1}{2}\mvec{p} \cdot \mvec{p}
\end{align}
we introduce normalizations to the momentum coordinates; for instance, we can expand the momentum in terms of tangent or normalized vectors $\mvec{p} = \sum_i p_i\dbasis{i} = \sum_i \hat{p}_i\ndbasis{i}$. (In this section, to avoid confusion, we write out all summations explictly and drop the Einstein summation convention). Here, the  tangent vectors are related to the normalized vectors via $\basis{i} = \vnorm{\basis{i}}\nbasis{i}$. In order to isolate the momentum components in terms of the normalized momentum coordinates, we introduce an associated dual basis $\ndbasis{i} \cdot \nbasis{j} = \delta\indices{^i_j}$, where we still have the original tangent and dual basis relation $\dbasis{i} \cdot \basis{j} = \delta\indices{^i_j}$. Note that normalization does not mean the bases are orthogonal. Introducing a new metric, $\eta$, for the normalized coordinates, we have 
\begin{align}
    \eta_{ij} \equiv \nbasis{i} \cdot \nbasis{j} = \frac{g_{ij}}{\vnorm{\basis{i}}\vnorm{\basis{j}}}. \label{eqn:eta_ij}
\end{align}
Furthermore, utilizing equation \ref{eqn:eta_ij} and the identities $g^{ij}g_{jk} = \eta^{ij}\eta_{jk} = \delta\indices{^i_k}$, we have a second equation for the contravariant components
\begin{align}
    \eta^{ij} = g^{ij}\vnorm{\basis{i}}\vnorm{\basis{j}}. \label{eqn:eta_ij_inv}
\end{align}
Here, it is important to note that there is no implied sum on repeated indices here. 

We can replace the momentum and time derivative of the position with expressions in terms of the normalized duals and tangent basis, respectively 
\begin{align}
    \mvec{p} &= \sum_k \hat{p}_k \ndbasis{k} \\
    \dot{\mvec{x}} &= \sum_k \dot{x}^k \basis{k}
\end{align}
which then gives expressions for replacing $\mvec{p}$ and $\dot{\mvec{x}}$ in the phase space Lagrangian:
\begin{align}
    \mathcal{L}(x^i,\dot{x}^i,\hat{p}_i,\dot{\hat{p}}_i) 
    &= 
    \sum_i \hat{p}_i\dot{x}^i\vnorm{\basis{i}}
    - 
    \sum_{i,j} \frac{1}{2}\eta^{ij}\hat{p}_i\hat{p}_j. \label{eqn:norm_momentum_phase_space_lagrangian}
\end{align}
Computing the Euler-Lagrange equations of $\mathcal{L}$, equation \ref{eqn:norm_momentum_phase_space_lagrangian}, we obtain evolution equations for $\dot{x}^k$ and $\dot{\hat{p}}_k$.
\begin{align}
    \dot{x}^k 
    &=
    \sum_{j} \frac{\eta^{kj}\hat{p}_j}{\vnorm{\basis{k}}}
    \label{eqn:norm_p_x_dot}\\
    \dot{\hat{p}}_k 
    &=
    \frac{1}{\vnorm{\basis{k}}}
    \bigg[ 
      \sum_i
      \bigg( 
        \hat{p}_i \pfrac{\vnorm{\basis{i}}}{x^k} 
        -\hat{p}_k \pfrac{\vnorm{\basis{k}}}{x^i}
      \bigg) 
      \dot{x}^i
      -
      \sum_{i,j}
      \frac{1}{2} \pfrac{\eta^{ij}}{x^k}\hat{p}_i\hat{p}_j
    \bigg] 
    \label{eqn:norm_p_p_dot}
\end{align}
We can relate these norms to the metric to simplify these expressions, $\vnorm{\basis{k}} = \sqrt{\basis{k} \cdot \basis{k}} = \sqrt{g_{kk}}$ (no sum implied). Using the equations of motion \ref{eqn:norm_p_x_dot} and \ref{eqn:norm_p_p_dot} and the partials of the Hamiltonian, we arrive at the bracket for the distribution $f(x^k,\hat{p}_k,t)$
\begin{align}
    \{ f, H\} 
    =
    &\sum_k \frac{1}{\sqrt{g_{kk}}} 
    \left(
    \pfrac{f}{x^k}\pfrac{H}{\hat{p}_k}
    - 
    \pfrac{H}{x^k}\pfrac{f}{\hat{p}_k}  
    \right) \notag \\
    &+
    \sum_{i,k}  \frac{1}{\sqrt{g_{kk} g_{ii}}} 
    \left(
    \hat{p}_i \pfrac{\left( \sqrt{g_{ii}}\right)}{x^k} 
    -
    \hat{p}_k \pfrac{\left( \sqrt{g_{kk}}\right)}{x^i}
    \right) 
    \pfrac{f}{\hat{p}_k}\pfrac{H}{\hat{p}_i}
\end{align}
with the Jacobian of the Poisson Tensor, $\mathcal{J} = 1/\sqrt{\det(\Pi)}$, over $d$-dimensions is
\begin{align}
   \mathcal{J} =\sqrt{\displaystyle\prod_{k=1}^d g_{kk} }.
\end{align}
%% Has been verified: see script for algebra: /Users/gjohnson/Desktop/Gkeyll_Install/Paper on Hamiltonian Flows/Extra_Cas/det_norm_coords.m

\subsection{Verification of the Momentum-Normalized Bracket}

A straightforward verification of the rather complicated normalized bracket is to compare the resulting equations of motion directly with the geodesic equation. Using equation \ref{eqn:norm_p_x_dot} to eliminate $\hat{p}_j$ in equation \ref{eqn:norm_p_p_dot}, then differentiating equation \ref{eqn:norm_p_x_dot} with respect to time and eliminating $\dot{\hat{p}}_k$, as well as using the shift identity from Section \ref{sec:Hamiltonian_Formulations}, we have a second-order equation for the motion given by
\begin{align}
    \ddot{x}^k
    = 
    \sum_{i,j,l}
    \frac{ \eta^{kj} \vnorm{\basis{l}}  }{  \vnorm{\basis{k}} }
    \bigg(
      -\pfrac{\eta_{jl}}{x^i} 
      -\frac{\eta_{jl}}{\vnorm{\basis{k}}} \pfrac{\vnorm{\basis{k}}}{x^i} 
      - \frac{\eta_{jl}}{\vnorm{\basis{j}}} \pfrac{\vnorm{\basis{j}}}{x^i}
      + \frac{ \eta_{il}}{\vnorm{\basis{j}}} \pfrac{\vnorm{\basis{i}}}{x^j}
      + \frac{1}{2}\frac{ \vnorm{\basis{i}} }{ \vnorm{\basis{j}} }\bigg( \pfrac{\eta_{il}}{x^j} \bigg)
    \bigg) \dot{x}^i \dot{x}^l
    \label{eqn:norm_p_x_dot_dot}
\end{align}

The second-order equation of motion arising from the non-canonical bracket can be compared to the geodesic equation,
\begin{align}
    \ddot{x}^k &= - \sum_{i,l} \Gamma\indices{^k_{il}} \dot{x}^i \dot{x}^l,\label{eqn:geodesic_with_explicit_sum}
\end{align}
by expanding the Christoffel symbol on the right-hand side and performing a change of variables to normalized momentum coordinates. Rewriting the Christoffel symbol in equation \ref{eqn:geodesic_with_explicit_sum} of the metric coefficients, we have 
\begin{align}
    \ddot{x}^k &= - \sum_{i,j,l} g^{jk} \left( \pfrac{g_{ji}}{x^l} - \frac{1}{2} \pfrac{g_{il}}{x^j} \right) \dot{x}^i \dot{x}^l.
\end{align}
Then, replace the metric components of $g$ in terms of $\eta$ from equations \ref{eqn:eta_ij} and \ref{eqn:eta_ij_inv}, and simplify the expression. 
\begin{align}
    \ddot{x}^k 
    &= 
    - \sum_{i,j,l}  
    \frac{\eta^{jk}}{\vnorm{\basis{j}}\vnorm{\basis{k}}}
    \bigg( \pfrac{}{x^l} 
        \left( 
        \eta_{ji} \vnorm{\basis{j}} \vnorm{\basis{i}} 
        \right) 
    - \frac{1}{2}\pfrac{}{x^j} 
        \left( 
        \eta_{il} \vnorm{\basis{i}} \vnorm{\basis{l}}
        \right)
    \bigg)  \dot{x}^i \dot{x}^l \\
    \ddot{x}^k 
    &= 
    - \sum_{i,j,l}  
    \frac{\eta^{kj}}{\vnorm{\basis{j}}\vnorm{\basis{k}}}
    \bigg( 
        \eta_{ji} \vnorm{\basis{j}} \pfrac{\vnorm{\basis{i}} }{x^l} 
        +
        \eta_{ji} \pfrac{ \vnorm{\basis{j}} }{x^l}  \vnorm{\basis{i}}
        +
         \pfrac{ \eta_{ji} }{x^l}  \vnorm{\basis{j}} \vnorm{\basis{i}} \nonumber \\
    &\quad  - \frac{1}{2} 
        \bigg( 
            \underbrace{
                \eta_{il} \vnorm{\basis{i}} \pfrac{\vnorm{\basis{l}}}{x^j} 
                +
                \eta_{il} \pfrac{ \vnorm{\basis{i}}}{x^j} \vnorm{\basis{l}}
            }_{
              \eta_{il} \vnorm{\basis{i}} \pfrac{\vnorm{\basis{l}}}{x^j}\dot{x}^i \dot{x}^l 
              =
              \eta_{il} \vnorm{\basis{l}} \pfrac{\vnorm{\basis{i}}}{x^j}\dot{x}^i \dot{x}^l 
            }
            +
            \pfrac{\eta_{il}}{x^j} \vnorm{\basis{i}} \vnorm{\basis{l}}
        \bigg)
    \bigg) \dot{x}^i \dot{x}^l \\
    \ddot{x}^k
    &= 
    \sum_{i,j,l}
    \frac{ \eta^{kj} \vnorm{\basis{l}}  }{  \vnorm{\basis{k}} }
    \bigg(
      -\pfrac{\eta_{jl}}{x^i} 
      -\frac{\eta_{jl}}{\vnorm{\basis{k}}} \pfrac{\vnorm{\basis{k}}}{x^i} 
      - \frac{\eta_{jl}}{\vnorm{\basis{j}}} \pfrac{\vnorm{\basis{j}}}{x^i}
      + \frac{ \eta_{il}}{\vnorm{\basis{j}}} \pfrac{\vnorm{\basis{i}}}{x^j}
      + \frac{1}{2}\frac{ \vnorm{\basis{i}} }{ \vnorm{\basis{j}} }\bigg( \pfrac{\eta_{il}}{x^j} \bigg)
    \bigg) \dot{x}^i \dot{x}^l
    \label{eqn:norm_p_x_dot_dot_from_geodesic}
\end{align}
Equation \ref{eqn:norm_p_x_dot_dot_from_geodesic} exactly matches equation \ref{eqn:norm_p_x_dot_dot}, verifying that the introduction of the normalized momentum coordinates has not changed the equations of motion that the particles follow.

\subsection{Example: Normalized Polar Coordinates}

Now, revisiting the annular disk scenario from the beginning of this section, let's apply the momentum coordinate normalization. Consider an extended domain of $r = 0.05$ to $1.5$ and $\theta = 0$ to $2\pi$ with a distribution at constant temperature, $T = 1$, no initial flows, and uniform density. In canonical momentum components, the momentum bounds needed to resolve the temperature at the inner and outer radial boundaries are quite large; the radial dependence introduces a factor of $r_{max}/r_{min} = 30$ in the required momentum range to resolve the innermost point for equivalent resolution at the outermost point.

In general, an estimate for the momentum domain bounds required for this problem can be approximated by identifying where the momentum in each coordinate direction has reduced the distribution by a thermal width. We can then multiply this width by a factor $C$ to obtain
\begin{align}
    p_{i,max} \sim C\sqrt{2T/g^{ii}}, \label{eqn:approx_bound_norm}
\end{align}
where a value of $C = 4$ corresponds to four thermal widths. For orthogonal coordinate systems, such as the ones we have primarily discussed so far — flat, annulus, spherical, and hyperbolic — equation \ref{eqn:approx_bound_norm} is exactly the momentum distance from the origin at which the distribution has dropped by $f_{lte}/A = \exp(-C)$, assuming zero bulk momentum. For these orthogonal but not orthonormal coordinate systems, the normalized non-canonical bracket is most effective at reducing the grid size. Shear caused by non-orthogonal bases complicates the matter, and the normalized momentum bracket is not guaranteed to reduce the grid requirements to resolve a given temperature.

Figure \ref{fig:annulus_norm_grid} shows an example of grid cells for the annulus. The slices in both canonical coordinates and normalized coordinates visually demonstrate the dependence of momentum cell width on radius in the canonical case. For instance, the distribution at the outermost radial point (panel (b) at $r = 1.5$) can be resolved with $8-12$ cells in $p_\theta$. The innermost point at $r = 0.05$ requires $30$ times as many cells in $p_\theta$ to resolve the same temperature with the same number of cells in $p_\theta$. The normalized momentum coordinates, $\hat{p}_\theta$, in panel (c) represent the same distribution function, but because of the removal of the geometric factor, remain uniform in momentum space at any point on the grid.

We emphasize the trade-off that is being outlined here with the normalized momentum coordinate approach: we proved $L_2$ stability of the canonical bracket in Section \ref{sec:scheme}, but found that $L_2$ stability could not be shown for a non-canonical bracket. Nevertheless, the increased efficiency of the phase space representation in these normalized coordinates is a vitally important option for large-scale calculations in certain coordinate systems, such as the polar coordinates exemplified here, spherical coordinates, or generalized to spherical Boyer-Lindquist or Kerr-Schild coordinates in general relativity. These geometries all have factors of radius that influence the canonical momentum space associated with the angular coordinates. Therefore, these geometries would all benefit from this approach by reducing the computational and memory costs of representing the problem discretely. The flexibility of our approach is thus made manifest: we can discretize both canonical and non-canonical brackets and leverage these trade-offs for a variety of different coordinate systems and physical systems. In fact, we envision that the generalization of this approach, using locally orthonormal bases such as triads or tetrads to further optimize our momentum space representation in the presence of e.g., shear, will naturally connect the DG foundation laid out here with general relativity to enable efficient large-scale calculations of systems such as compact objects.

% Annular disk figure (Uniform T and N, divide out J)
\begin{figure}[H]
\centering
\includegraphics[width = 1.0\linewidth]{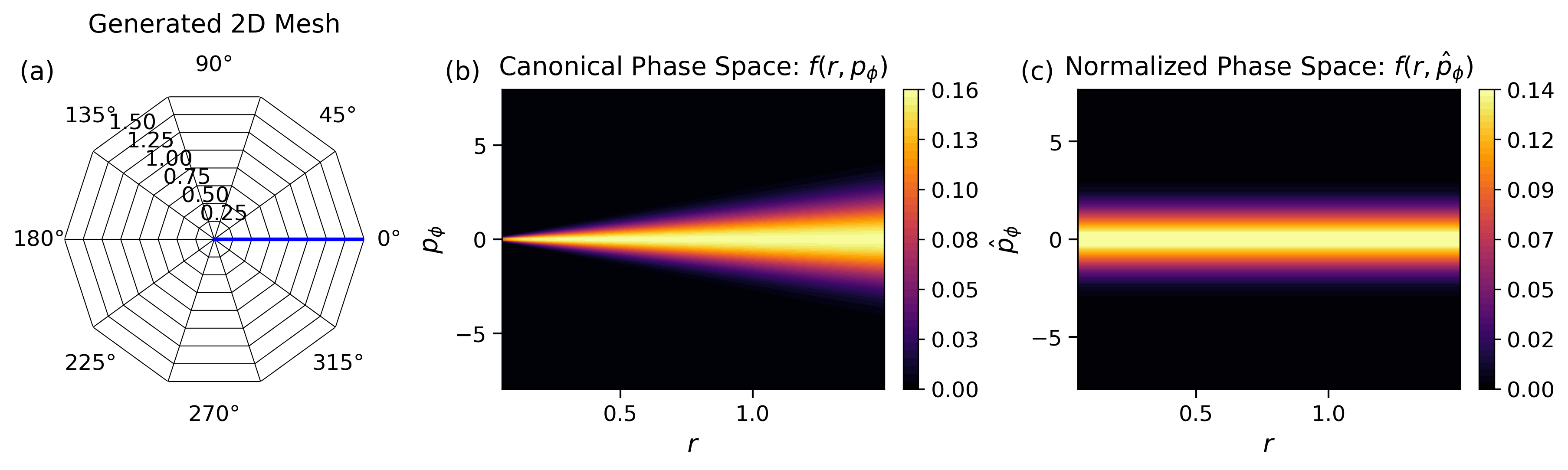}
\caption{ Phase space of a constant density $n = 1$ and temperature $T = 1$. Panel (a) is a 2D mesh of an annular grid of size $Nr \times N\theta  = 8 \times 10$. The blue line indicates the slices along which the phase space is plotted in the other two panels. Panel (b) is the phase space in canonical coordinates $r$ and $p_\theta$. Panel (c) plots the distribution function in normalized momentum coordinates $\hat{p}_\theta$. }\label{fig:annulus_norm_grid}
\end{figure}
\section{Conclusions and a Prospectus for Future Work} \label{sec:Conclusions}
In this paper, we have constructed a conservative discontinuous Galerkin scheme for the kinetic equation on curved manifolds. Free-streaming motion is formulated in terms of the Hamiltonian, whose structure encodes the geometry, and whose partial derivatives define the corresponding kinetic equation. In addition to our canonical scheme conserving the total number of particles, by enforcing the continuity of the discretely represented Hamiltonian, the normal characteristics are continuous, which allows us to prove that the discrete scheme is conservative for the total energy. For the non-canonical discrete system, conservation laws are maintained when the discrete Poisson tensor and Jacobian are also represented continuously. 

Collisions on curved surfaces are incorporated through a moment-preserving BGK operator. This operator handles collisions implicitly, without a time-step restriction, while retaining the conservation laws of the overall scheme. In the fluid (high-collisionality) limit, the scheme reproduces Euler's equations on curved manifolds and converges to the exact Riemann solution, demonstrating that the discrete scheme is asymptotic preserving.

A key result of this formulation is that the continuum scheme can, for arbitrary geometries, compute noise-free and alias-free solutions to the kinetic equation. This formulation thus enables highly accurate computations of small perturbations compared to the local equilibrium, from which one can compute transport coefficients and closures for kinetic corrections to reduced models. The utility of this formulation is highlighted for problems such as the Kelvin-Helmholtz instability, where detailed phase-space structure can be extracted from the kinetic distribution and local equilibrium. Furthermore, we have shown extensions of the Hamiltonian system, such as uniform rotation of a sphere. Additional physics can therefore be added to the solver by modifying the Hamiltonian to account for effects such as rotation while maintaining the conservation and stability properties.

One component of the success of the canonical scheme is the proof and demonstration of the $L_2$ stability of the discrete scheme; however, canonical coordinates are not always optimal. Canonical coordinates depend on local curvature, which sometimes puts stringent requirements on the needed momentum-space resolution. We show that this geometric dependence can be completely eliminated for orthogonal coordinate systems by using a non-canonical bracket, which maintains discrete conservation laws and the same underlying equations of motion, However, in the non-canonical case one does not inherit the same $L_2$ stability due to the lack of discrete phase-space incompressibility. Nevertheless, the improved efficiency of the phase space representation with non-canonical coordinates indicates a future direction of research into a generalized formulation via triads or tetrads (Cartan's methods of mobiles) in which the momentum-space basis is orthonormal. This triad/tetrad approach would naturally lead to relativistic extensions, where tetrad frames can describe local physics, such as collisions, while the distribution function flows along geodesics in the presence of gravity.

We conclude by emphasizing the success of the modeling approach outlined here. A Hamiltonian formulation using canonical or non-canonical coordinates can model a diverse array of systems, and the successful implementation of our approach is demonstrated on a collection of neutral gas test cases spanning the collisionless to collisional limit in non-trivial geometries. Indeed, the formalism derived, implemented, and tested in this manuscript provides a natural extension to previous modeling efforts coupling neutral particle dynamics to magnetized plasmas in fusion reactor geometries \cite{bernard2022kinetic,bernard2023blob}. The flexibility of this approach, starting with a Hamiltonian and making judicious choices of coordinates for the problem of interest, combined with developments in discontinuous Galerkin schemes for kinetic equations, provides a powerful platform for solving the Boltzmann equation. 

Future work will extend this Hamiltonian formulation to general relativity, where gravity manifests as spacetime curvature. The discrete scheme in canonical coordinates for free-streaming particles in curved spacetime is structurally identical and inherits the same properties when applied to the corresponding general relativistic Hamiltonian. However, a reformulation is required to incorporate processes such as collisions via a tetrad representation, which we will present in a future publication. The resulting relativistic solver will provide a new modeling foundation for kinetic physics, as well as provide the ability to compute closures and transport coefficients around compact objects in mixed collisional regimes. We envision this solver finding general use in modeling systems such as current sheets, jets around compact objects, and high-density-gradient regions such as pulsar magnetospheres.
\appendix

\section{Useful Identities for Multivariate Gaussian Distributions}

Consider the multivariate Gaussian distribution
\begin{align}
  f = \exp\left(-\frac{1}{2} \mvec{p}^ T\mvec{A} \mvec{p}\right)
\end{align}
where $\mvec{A}$ is a symmetric positive-definite $d\times d$ matrix
and $\mvec{p}$ is a vector. We would like to compute the moments of this
function. Consider, for example,
\begin{align}
  I_0 = \int_{-\infty}^\infty f \thinspace d\mvec{p}.
\end{align}
As $\mvec{A}$ is a symmetric positive-definite matrix, we can decompose
it as
\begin{align}
  \mvec{A} = \mvec{R}\gvec{\Lambda}\mvec{R}^{T}
\end{align}
where $\mvec{R}$ is a matrix with right-eigenvectors as columns with
$\mvec{R} \mvec{R}^{T} = I$, and
$\gvec{\Lambda} = \mathrm{diag}(\lambda_1,\ldots,\lambda_d)$ is a
diagonal matrix with eigenvalues $\lambda_i$ on the diagonal. Furthermore, as we have $\mvec{R}^{-1} = \mvec{R}^T$,  $\mvec{R}$ is orthogonal, and we can choose the eigenvector signs such that $\det\mvec{R} = \det\mvec{R}^{T} = 1$.

Now, define $\mvec{y} = \mvec{R}^{T}\mvec{p}$. Then we can write the
multivariate Gaussian as
\begin{align}
  f = \prod_{i=1}^d \exp\left(-\frac{1}{2} y_i^2 \lambda_i\right)
\end{align}
Further, we have $d\mvec{p} = \det\mvec{R}\thinspace d\mvec{y} =
d\mvec{y}$. Hence, we can compute
\begin{align}
  I_0 = \prod_{i=1}^d \int_{-\infty}^\infty \exp\left(-\frac{1}{2} y_i^2
  \lambda_i\right)\thinspace dy_i
  =
  \prod_{i=1}^d \sqrt{\frac{2\pi}{\lambda_i}} =
  \frac{(\sqrt{2\pi})^d}{\sqrt{\det \mvec{A}}}.
  \label{eq:I0-multivariate-gauss}
\end{align}
Here we used the fact that the product of the eigenvalues is the
determinant\footnote{We have the identities, for $a>0$,
  \begin{align}
    \int_{-\infty}^\infty e^{-y^2 a/2} dy &= \sqrt{\frac{2\pi}{a}} \\
    \int_{-\infty}^\infty y^2 e^{-y^2 a/2} dy &= \sqrt{\frac{2\pi}{a^3}}.
  \end{align}
}.

Next, we will compute the integral
\begin{align}
  \mvec{I}_2 =
  \int_{-\infty}^\infty \mvec{p}\otimes \mvec{p} f \thinspace d\mvec{p}.
\end{align}
We can write this in component form as
\begin{align}
  I_{ij} 
  =
  \int_{-\infty}^\infty p_i p_j f \thinspace d\mvec{p}
  = 
  R_{im} R_{jn} \int_{-\infty}^\infty y_m y_n 
  \prod_{k=1}^d \exp\left(-\frac{1}{2} y_k^2 \lambda_k\right) 
  \thinspace dy_k.
\end{align}
As the odd moments of $f$ vanish, this integral is non-zero only if
$m=n$. Hence, we get (see the identity in the footnote)
\begin{align}
  I_{ij} 
  =
  R_{im} R_{jn} \frac{\delta_{mn}}{\lambda_m} \prod_{k=1}^d \sqrt{\frac{2\pi}{\lambda_k}}
  =
  R_{im} R_{jn} \frac{\delta_{mn}}{\lambda_m} \frac{(\sqrt{2\pi})^d}{\sqrt{\det \mvec{A}}}.
\end{align}
Now, we can write
\begin{align}
  R_{im} \frac{\delta_{mn}}{\lambda_m} R_{jn} = 
  \mvec{R} \gvec{\Lambda}^{-1} \mvec{R}^T = \mvec{A}^{-1}.
\end{align}
Hence, we obtain the final expression
\begin{align}
  \mvec{I}_2 = \mvec{A}^{-1} \frac{(\sqrt{2\pi})^d}{\sqrt{\det \mvec{A}}}.
  \label{eq:I2-multivariate-gauss}
\end{align}
Often, we need to compute instead
\begin{align}
  I_2 = \int_{-\infty}^\infty \mvec{p}\cdot\mvec{p} f \thinspace d\mvec{p} 
  =
  \Tr\mvec{I}_2 
  = 
  \Tr \mvec{A}^{-1} \frac{(\sqrt{2\pi})^d}{\sqrt{\det \mvec{A}}}.
\end{align}

\section{Some Weak Identities for Manipulating Discrete Weak Forms} \label{app:weak_ids}

Properties of the discrete scheme were proved by manipulating the discrete weak form. This is a tricky process, even with care. Invalid manipulations are hard to identify and easy to commit without a set of allowable identities. Here, we introduce some useful weak identities as well as invalid manipulations that are useful for proving the properties of the discrete scheme.

Central to the weak identities is the space in which the functions reside, $\mathcal{V}_h^p$,
\begin{equation}
    \mathcal{V}_h^p = \{ w:w|_{K_j} \in \mvec{P}^p,\forall K_j \in \mathcal{T} \}.
\end{equation}
which is a piecewise discontinuous polynomial space. This space contains polynomial functions $w(\mvec{z})$ in each cell $K_j$ and belongs to the space of linearly combined basis polynomials $\mvec{P}^p$. $\mvec{z}$ is the phase space within each cell. To clearly label quantities in this space, we label them with a subscripted $h$. Projections can include individual functions, such as $f_h$ or groupings of functions, such as $(fg)_h$. The general identity from which everything is derived is that for an arbitrary function $f$, we have 
\begin{equation}
    \int_{K_j} w_h f d\mvec{z} = \int_{K_j} w_h f_h d\mvec{z}
\end{equation} 
which essentially shows that the integral times the test function $w$ is a projection operator that takes arbitrary functions and projects them onto $\mathcal{V}_h^p$. From this, we have a few specific identities that serve as algebraic steps within the proofs for discrete properties.

The first weak identity
\begin{equation}
    \int_{K_j} w_h f_h g_h d\mvec{z} = \int_{K_j} w_h (f g)_h d\mvec{z}
\end{equation} 
for functions $w$, $f_h$, and $g_h$ in $\mathcal{V}_h^p$.

The second weak identity prevents arbitrary grouping 
\begin{equation}
    \int_{K_j} w_h (f g)_h h_h d\mvec{z} \neq \int_{K_j} w_h f_h g_h h_h d\mvec{z} 
\end{equation} 
for functions $w$, $f_h$, $g_h$, and $h_h$ in $\mathcal{V}_h^p$ 

The third weak identity is a generalization of the first to an arbitrary number of functions, $m$. For $f^n$ belonging to $\mathcal{V}_h^p$ for $n = 1 ... m$
\begin{equation}
    \int_{K_j} w_h \left( \prod_{n=1}^{m} f^n_h \right) d\mvec{z} = \int_{K_j} w_h \left( \prod_{n=1}^{m} f^n \right)_h d\mvec{z}
\end{equation} 

% This has been verified (using the generalized weak identity)
Lastly, we have a derivative identity. If $f_h$ satisfies
\begin{equation}
    \frac{df_h}{dx}  \doteq 0
\end{equation}
then we may write
\begin{equation}
    \int_{K_j} \frac{d}{dx} \left( f_h g_h \right) d\mvec{z} 
    =
    \int_{K_j} \frac{df_h}{dx} g_h d\mvec{z} 
\end{equation} 
where product rules and other vector identities may be applied within the cell where discrete functions are polynomial and are hence smooth.

\section{Geodesics on the Surface of a Sphere and Hyperbolic Surface}\label{app:geodesics}
This appendix elaborates on geodesic motion on the surface of a sphere and in hyperbolic geometry, which underlie the Kelvin-Helmholtz examples in Section \ref{sec:Benchmarks}. To avoid coordinate singularities at the poles of the sphere, we reflect particles at latitudes of $\theta_{min} = \pi/4$ and $\theta_{max} =3\pi/4)$. For the hyperbolic surface, reflections occur at $z = \pm 1$. These reflections are implemented to mirror how the distribution function is affected by changing the sign of the momentum component normal to the boundary. To visualize the motion in these geometries, Figure \ref{fig:geodesic_plots} plots representative geodesics.

For the surface of a sphere, we have the line element,
\begin{equation}
    ds_{sphere}^2 = r^2 d\theta^2 + r^2 \sin^2\theta d\phi^2,
\end{equation}
where the metric can be extracted immediately via $ds^2 = g_{ij}dx^i dx^j$. Using this metric in the Hamiltonian $H = \frac{1}{2}g^{ij}p_ip_j$ and computing Hamilton's equations, the resulting equations of motion for the sphere can be combined into two coupled second-order ODEs
\begin{equation}\label{eqn:surf_sphere_geodesic_ddot_theta}
    \ddot{\theta} = \frac{\dot{\phi}^2 \sin 2\theta}{2}
\end{equation}
\begin{equation}\label{eqn:surf_sphere_geodesic_ddot_phi}
   \ddot{\phi} = -\frac{2 \dot{\phi} \dot{\theta} \cos \theta}{\sin \theta}.
\end{equation}
These are the equations for great-circle geodesics. Therefore, any particle (not interacting with a reflecting $\theta$-boundary) will trace out a great circle shown in Panel (a) of Figure \ref{fig:geodesic_plots}.  If the geodesic of the particle is such that its path intersects the reflecting boundary, then the free streaming trajectory is significantly altered. Figure \ref{fig:geodesic_plots} panel (b) shows such a case. The particle reflects several times as it traverses the azimuthal angle of the sphere, eventually returning near its initial position but with an altered phase.

For a hyperbolic surface of constant $r$, the line element is
\begin{equation}
    ds_{hyperbolic}^2 = (r^2 +z^2) d\theta^2 + \frac{r^2 + 2z^2}{r^2 + z^2} dz^2.
\end{equation}
By computing the equations of motion for this metric in the free-streaming Hamiltonian, we obtain 
\begin{equation}\label{eqn:surf_hyperbolic_geodesic_ddot_theta}
    \ddot{\theta} = -\frac{2 \dot{\theta} z \dot{z}}{r^2 + z^2}
\end{equation}
\begin{equation}\label{eqn:surf_hyperbolic_geodesic_ddot_z}
    \ddot{z} = \frac{\dot{\theta}^2 z (r^2 + z^2)}{r^2 + 2z^2} - \frac{r^2 z \dot{z}^2}{r^4 + 3r^2 z^2 + 2z^4}.
\end{equation}
The geodesics in hyperbolic geometry, like those in spherical geometry, conserve angular momentum, $p_\theta$, due to the metric independence of $\theta$ in both cases. In the example trajectories of panels (c, d) in Figure \ref{fig:geodesic_plots}, the particles drift away from $z = 0$ until they interact with the boundaries at $z = \pm 1$ and reflect back toward $z = 0$.

This appendix has highlighted the trajectories that particles will take and how interaction with the boundaries of the domain affects their motion. For the simulation cases using these geometries in Section \ref{sec:Benchmarks}, the boundaries do not influence the initial growth of the Kelvin-Helmholtz instability, which occurs away from the boundaries as the shear layer is initialized in the interior of the domain. However, these boundary reflections must be included to reproduce the non-linear saturated state when the eddies begin interacting with the edges of the domain.

\begin{figure}[H]
\centering
\includegraphics[width = 1.0\linewidth]{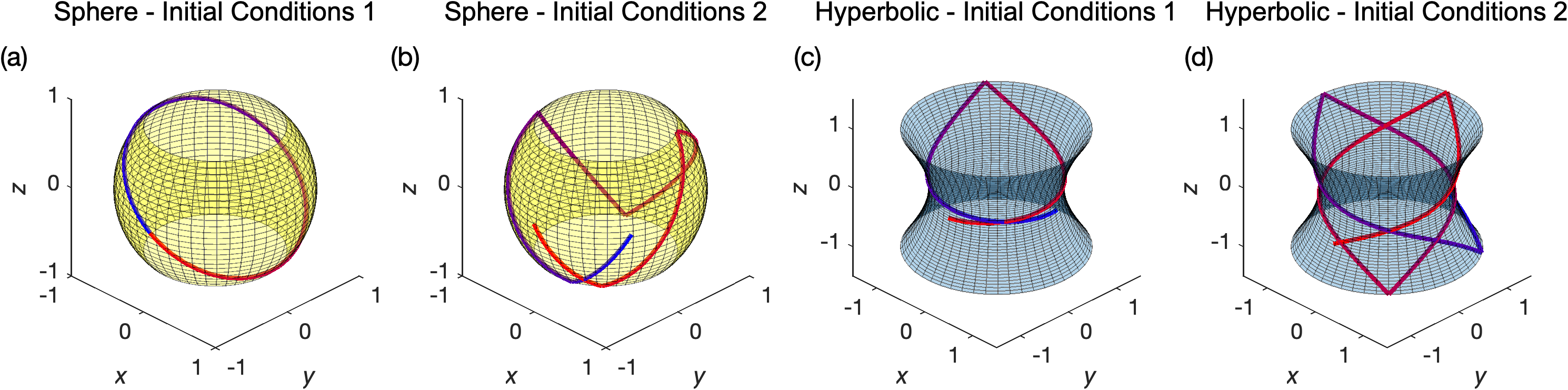}
\caption{Sample geodesics on the surfaces of a sphere (left) and hyperbolic surface (right). The color of the geodesic lines corresponds to initial time (red, $t_0 = 0$) and final time (blue, $t_f$). Two sets of initial conditions were used to demonstrate different paths. Initial condition 1 for the sphere of radius $r = 1$ is: $[\theta_0, \phi_0, p_{\theta,0}, p_{\phi,0}, t_f] = [\pi/2, -\pi/2, 0.8, 1.0, 5]$ for panel (a) and $[\pi/2, -\pi/2, 1.5, 1.0, 5.65]$ for panel (b). For the hyperbolic initial conditions with a fixed radial parameter $r = 0.5$: $[\theta_0, z_0, p_{\theta,0}, p_{z,0}, t_f] = [-\pi/2, -0.1, 1.5, 0.2, 6]$ for panel (c), and $[-\pi/2, -0.5, 1.5, 0.9, 10]$ for panel (d). }  \label{fig:geodesic_plots}
\end{figure}
\section*{Funding}
This work was supported by the U.S. Department of Energy under contract number DE-AC02-09CH11466. G.J. was supported by the U.S. Department of Energy, Office of Science, Office of Advanced Scientific Computing Research, Department of Energy Computational Science Graduate Fellowship under Award Number DE-SC0021110 and by the U.S. Department of Energy under Contract No. DE-AC02-09CH11466 via LDRD grants. A.H. and J.J. were supported by the U.S. Department of Energy under Contract No. DE-AC02-09CH11466 via LDRD grants. This work is supported by a DOE Distinguished Scientist award, the CEDA SciDAC project, and other PPPL projects via DOE Contract No. DE-AC02-09CH11466 for the Princeton Plasma Physics Laboratory.

\bibliographystyle{elsarticle-num}  % or elsarticle-num, elsarticle-num-names
\bibliography{references}

\end{document}